 \author[1]{Paul Gastin}
 \author[2]{Shankara Narayanan Krishna}
 \affil[1]{LSV, ENS Paris-Saclay \& CNRS, Universit\'e Paris-Saclay, France\\
   \texttt{paul.gastin@lsv.fr}}
 \affil[2]{Department of Computer Science \& Engineering, IIT Bombay, India\\
   \texttt{krishnas@cse.iitb.ac.in}}
\authorrunning{P. Gastin and S. Krishna}
\title{Unambiguous Forest Factorization\footnote{Partly supported by ReLaX, 
UMI2000 (CNRS, ENS Paris-Saclay, Univ.\ Bordeaux, CMI, IMSc).}}
\titlerunning{Unambiguous Forest Factorization}
\subjclass{}
\keywords{Automata, Regular expressions, Forest factorization}
\tikzstyle{vecArrow} = [decoration={markings,mark=at position
\tikzstyle{background}=[rectangle,fill=gray!10, inner sep=0.1cm, rounded corners=0mm]
\tikzstyle{cir}=[draw=violet, fill=violet!20!white,circle,minimum size=1.5em,inner sep=0em]                                                                                                      
     \tikzstyle{cir1}=[draw=green!50!blue, fill=green!40!white, circle,minimum size=1.5em,inner sep=0.1em]                                                                                                      
\tikzstyle{cir2}=[draw=red,fill=red!20!white,circle,minimum size=1.5em,inner sep=0em]                                                                                                    
\tikzstyle{cir11}=[draw=green!20!blue,fill=green!40!white,rounded rectangle,minimum size=1.5em,inner sep=0em]                                                                                                    
\tikzstyle{cir22}=[draw=green!70!violet,fill=green!15!violet!40!white,rounded rectangle,minimum size=1.5em,inner sep=0em]
\tikzstyle{background}=[rectangle,fill=gray!10, inner sep=0.1cm, rounded corners=0mm]
\tikzstyle{loc}=[draw,rectangle,minimum size=1.4em,inner sep=0em]
\tikzstyle{trans}=[-latex, rounded corners]
\tikzstyle{trans2}=[-latex, dashed, rounded corners]
\newcommand{\Aa}{\mathcal{A}}
\newcommand{\Bb}{\mathcal{B}}
\newcounter{todocounter}
\newcommand\Lang[1]{\mathcal{L}(#1)}
\newcommand{\unitS}{\mathbf{1}_S}
\newcommand{\A}{\ensuremath{\mathcal{A}}\xspace}
\newcommand{\da}{{\downarrow}}
\renewcommand\phi{\varphi}
\newcommand{\xRightarrow}[2][]{\ext@arrow 0359\Rightarrowfill@{#1}{#2}}
\begin{document}

\maketitle
\abstract
In this paper, we look at an unambiguous version of Simon's forest factorization
theorem, a very deep result which has wide connections in algebra, logic and
automata.  Given a morphism $\varphi$ from $\Sigma^+$ to a finite semigroup $S$,
we construct a universal, unambiguous automaton $\Aa$ which is ``good'' for
$\varphi$.  The goodness of $\Aa$ gives a very easy proof for the forest
factorization theorem, providing a Ramsey split for any word in
$\Sigma^{\infty}$ such that the height of the Ramsey split is bounded by
the number of states of $\Aa$.  An important application of synthesizing good
automata from the morphim $\varphi$ is in the construction of regular transducer
expressions (RTE) corresponding to deterministic two way transducers.

\section{Introduction}

In this paper, we revisit Simon's forest factorization theorem, a central result
in algebraic automata theory.  In his seminal paper \cite{Simon_1990}, Simon
showed that, given a semigroup morphism $\varphi\colon \Sigma^+ \to S$, any word
$w \in \Sigma^{+}$ admits a factorization tree $T(w)$ of height $\leq 9|S|$.
Leaves of $T(w)$ are letters from $\Sigma$ and the yield of $T(w)$ is the word
$w$.  Internal nodes have arity at least two.  Each node $x$ of $T(w)$ is labeled
$F(x)=\varphi(u_x)$ where $u_x$ is the yield of the subtree rooted at $x$.  The main
constraint is that, if an internal node $x$ has arity $n>2$ with children
$x_1,\ldots,x_n$ then $F(x_1)=\cdots=F(x_n)=e$ is an idempotent of $S$. There 
are no constraints for binary nodes. Simon's factorization theorem has many 
deep applications, see e.g., 
\cite{DBLP:conf/dlt/Bojanczyk09,ColcombetFactForest}.

An easy consequence of Simon's forest factorization theorem is that there is a
regular expression $F=\bigcup_{i}F_i$ which is universal (the denoted language
is $\Lang{F}=\Sigma^{+}$) and such that (1) for each subexpression $E$ of some
$F_i$ the denoted language $\Lang{E}$ is mapped by $\phi$ to a single semigroup
element $s_E$, and (2) for each subexpression $E^{+}$ of some $F_i$ the
associated element $s_E$ is an idempotent of $S$.  In addition, the
subexpressions $F_i$ do not use union and have $(\cdot,+)$-depth at most $9|S|$
(the depth of $F_i$ is the longest chain of concatenations and Kleene plus,
i.e., the height of the syntax tree of $F_i$).  A similar statement is given in
\cite{DBLP:conf/fct/Colcombet07,DBLP:conf/dlt/Bojanczyk09,ColcombetFactForest}.
Actually, the converse is also true.  If $F=\bigcup_{i}F_i$ is a universal
regular expressions satisfying (1) and (2), each word $w\in\Sigma^{+}$ can be
parsed according to some $F_i$ and the parse tree is a factorization tree for
$w$.

In this paper, we show how to construct a universal regular expression
$F=\bigcup_{i}F_i$ satisfying (1) and (2) and which in addition is
\emph{unambiguous}.  Therefore, each word $w\in\Sigma^{+}$ admits a
\emph{unique} parse tree according to $F$, which is indeed a factorization tree.
The forest factorization theorem was extended to infinite words by Colcombet in
\cite{DBLP:conf/fct/Colcombet07,Colcombet_2010}.  We also extend our unambiguous
version to infinite words: we can construct an $\omega$-regular expression
$\bigcup_{i}F_iG_i^{\omega}$ which is universal, unambiguous, and the
subexpressions $F_i$, $G_i$ satisfy (1) and (2).  We call these \emph{good}
expressions.

This work is motivated by \cite{lics18} in which regular transducer expressions
(RTE) are  defined and shown equivalent to deterministic two-way transducers
(both for finite and infinite words in which case the transducer may use regular
look-aheads).  The universal good expression is used to parse the input word,
and from the parse tree, the output is suitably computed.  Since deterministic
transducers define functions, it is essential that each input word has a unique
parse tree.  This explains the need for an unambiguous extension of Simon's
forest factorization theorem.  The other properties (1) and (2) are also
essential to compute an RTE equivalent to the given deterministic transducer.
We believe that the existence of \emph{good} regular expressions may have
several other applications.

After the initial bound of $9|S|$ by Simon, there have been follow ups.  In
\cite{DBLP:conf/fct/Colcombet07}, Colcombet extended Simon's result to infinite
words and reduced the bound to $3|S|$.  He used a new proof technique,
constructing Ramsey splits from which the factorization trees can be easily
derived.  Kufleitner \cite{mfcs08} also improved the bound on the height to
$3|S|-1$.  A variant of Kufleitner's proof can be found in
\cite{DBLP:conf/dlt/Bojanczyk09}.  The bound on the height of factorization
trees was further improved in \cite{ColcombetFactForest} to $3|N(S)|-1$, where
$N(S)$ is the maximum over all chains $D_1 <_{\mathcal{J}} \dots <_{\mathcal{J}}
D_k$ of $\mathcal{D}$-classes of the sum $\sum_{i=1}^k N(D_i)$ and $N(D)$ is 1
if $D$ is irregular, else $N(D)$ is the number of elements of $D$ which are
$\mathcal{H}$-equivalent to an idempotent.  The proofs above are based on
Green's relations.  Subsequently, a simplified proof not based on Green's
relations was given in \cite{Diekert_2016} using the local divisor technique.
Also in \cite{ColcombetFactForest}, a deterministic version of Simon's forest 
factorization is given, but to achieve the determinism, conditions (1) and (2) 
had to be weakened.

The main contributions of this paper are as follows.  Given a semigroup morphism
$\varphi\colon \Sigma^+ \to S$, we construct a universal, unambiguous automaton
$\Aa$ that we call ``$\varphi$-good''.  The goodness of $\Aa$ is determined by
the following conditions (i) it is unambiguous and universal (it accepts all
words in $\Sigma^{\infty}$), (ii) it has a unique initial state $\iota$ with no
incoming transitions to it, (iii) it has a unique final state $f$ with no outgoing
transitions from it, (iv) there is a total ordering on the states of the
automaton such that $Q\backslash \{\iota, f\}< f < \iota$, and (v) for each
state $q$, the set of words that start at $q$, and come back to it, visiting
only lower ranked states than $q$, must be mapped to a unique idempotent $e_q
\in S$.  These properties of $\Aa$ are crafted in such a manner that given any
word $w \in \Sigma^{\infty}$, the unique accepting run of $w$ on $\Aa$ easily
produces a Ramsey split of $w$ (in the sense of Colcombet), the height of the
split being bounded above by the number of states of $\Aa$.

We construct a $\varphi$-good automaton by induction on
$(|S|,|\varphi(\Sigma)|)$ with a lexicographic ordering, a technique introduced
by Wilke \cite{Wilke_1999} and that is very close in spirit to the local divisor
technique of \cite{Diekert_2016}.  The easy base cases of the induction are
when $S$ is a group, and when $|\varphi(\Sigma)|=1$. 
The inductive cases are when we consider a semigroup element $c \in S$ such that
$Sc \subsetneq S$ or $cS \subsetneq S$.  The inductive cases are technically
involved.
The case $Sc \subsetneq S$ is a bit simpler than the other one.  When one deals
with commutative semigroups, we could therefore, simply use this case.  We call
the automaton weakly good if we drop condition (iii) which introduces
non-determinism.  Upto the first inductive case, we can obtain a weakly good
automaton which is deterministic.  But with the second inductive case $cS
\subsetneq S$, things get more complex, and we show that it is not possible to
obtain deterministic weakly good automata.  In a way, the price we pay in
obtaining Ramsey splits is the non-determinism.  This must be contrasted with
the construction of Colcombet \cite{Colcombet_2010}, where a \emph{forward
Ramsey split} is obtained, while retaining determinism in the automaton.  One
way we can avoid non-determinism is to allow look-aheads in the constructed good
automata.  It turns out however that, it is not possible to obtain a bounded
look-ahead, and in general, one needs a regular look-ahead in the constructed
good automaton.  
   
The good automata, though challenging in its construction and proof of
correctness, has some nice take-aways: (1) It provides a very simple proof of
the forest factorization theorem, and (2) it allows us to synthesize good
expressions \cite{lics18} by a standard elimination of states in $\Aa$.  The
properties imposed on $\Aa$ which make it good, helps significantly in both
these cases: (1) in the case of the forest factorization, the states of $\Aa$
are used in labelling the positions of the word; whenever a state repeats, we
declare then equivalent, as long as no higher state has been seen in between.
This trivially gives a Ramsey split, with the height being the number of states
of $\Aa$.  (2) The synthesis of good expressions follows very easily thanks to
the unambiguity of $\Aa$, the ordering on the states, and the condition of
obtaining a unique idempotent while returning to a state without seeing a higher
state.  

Our construction of good automata is in general exponential in the size of the 
semigroup. It would be interesting to study how this construction can be 
optimized.

\section{Unambiguous Forest Factorization}\label{sec:Uforest}
Let $\Sigma$ denote a finite alphabet.  $\Sigma^{\infty}$
represents $\Sigma^* \cup \Sigma^{\omega}$, the set of finite or infinite words. 
Given a word $w=a_1a_2\dots$ with $a_i \in \Sigma$, $w[x, \dots, y]$ denotes 
the word $a_x \dots a_y$. 
For rational expressions over $\Sigma$ we will use the following syntax:
$$
F ::= \emptyset \mid \varepsilon \mid a \mid F\cup F \mid F\cdot F \mid F^+
$$
where $a\in\Sigma$. For reasons that will be clear below, we prefer to use the 
Kleene-plus instead of the Kleene-star, hence we also add $\varepsilon$ 
explicitely in the syntax. An expression is said to be $\varepsilon$-free if it 
does not use $\varepsilon$.
We denote by $\Lang{E}$ the regular language denoted by $E$.

Let $(S,\cdot,\unitS)$ be a \emph{finite} monoid and $\varphi\colon\Sigma^*\to S$ be a
morphism.
We say that a rational expression $F$ is $\varphi$-\emph{good} (or simply
\emph{good} when $\varphi$ is clear from the context) when
\begin{enumerate}[nosep]
  \item the rational expression $F$ is unambiguous,
  
  \item for each subexpression $E$ of $F$ we have 
  $\Lang{E}\subseteq\phi^{-1}(s_E)$ for some $s_E\in S$.
  \item for each subexpression $E^+$ of $F$ we have 
  $\Lang{E}\subseteq\phi^{-1}(s_E)$ for some \emph{idempotent} $s_E\in S$.
\end{enumerate}
\newcommand\rewrites\Rightarrow
Notice that the classical rewrite rules used to simplify 
expressions using $\emptyset$ preserve good expressions. These rewrite rules are
$\emptyset^{+}\rewrites\emptyset$, 
$\emptyset\cdot F\rewrites\emptyset$, 
$F\cdot\emptyset\rewrites\emptyset$, 
$\emptyset\cup F\rewrites F$, 
$F\cup\emptyset\rewrites F$. Hence, each good expression is equivalent to a 
good expression which is either simply $\emptyset$, or does not use 
$\emptyset$ as a subexpression. Also, $\varepsilon$-freeness is preserved by this 
simplification.

\begin{theorem}[Unambiguous Forest Factorization]\label{thm:U-forest}
  Let $\varphi\colon\Sigma^*\to S$ be a morphism to a finite monoid
  $(S,\cdot,\unitS)$.  
  \begin{enumerate}[nosep,label=$(\mathsf{T}_{\arabic*})$,ref=$\mathsf{T}_{\arabic*}$]
  \item\label{item:T1} For each $s\in S$, there is an $\varepsilon$-free
  \emph{good} rational expression $F_s$ such that
    $\Lang{F_s}=\varphi^{-1}(s)\setminus\{\varepsilon\}$.
    Therefore, $G=\bigcup_{s\in S}F_s$ is an unambiguous rational
    expression over $\Sigma$ such that $\Lang{G}=\Sigma^+$.
  
  \item\label{item:T2} There is an \emph{unambiguous} rational expression
  $G=\bigcup_{k=1}^{m}F_k\cdot G_k^\omega$ over $\Sigma$ such that
    $\Lang{G}=\Sigma^\omega$ and for all $1\leq k\leq m$ the expressions $F_k$
    and $G_k$ are $\varepsilon$-free $\varphi$-good rational expressions and
    $s_{k}$ is an idempotent, where $\Lang{G_k}\subseteq\varphi^{-1}(s_{k})$.
  \end{enumerate}
\end{theorem}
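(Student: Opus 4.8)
The plan is to reduce the theorem to the construction of a $\varphi$-good automaton $\Aa$ in the sense described in the introduction, and then extract the good expressions from $\Aa$ by a careful state-elimination procedure. Concretely, I would first construct, by induction on the lexicographic pair $(|S|,|\varphi(\Sigma)|)$, an automaton $\Aa$ over $\Sigma$ which is unambiguous, universal over $\Sigma^{\infty}$, has a unique initial state $\iota$ with no incoming edges, a unique final state $f$ with no outgoing edges, a total order on states with $Q\setminus\{\iota,f\}<f<\iota$, and such that for every state $q$ the language of loops at $q$ staying strictly below $q$ maps under $\varphi$ to a single idempotent $e_q$. The base cases are $|\varphi(\Sigma)|=1$ (a single letter up to $\varphi$, essentially trivial: order the states along the word and the loop condition is automatic), and $S$ a group (where every subsemigroup is a subgroup, idempotents are the identity, and one can use a Cayley-graph-like automaton). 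The inductive step splits on whether there is $c\in S$ with $Sc\subsetneq S$ or $cS\subsetneq S$; one restricts $\varphi$ to a smaller alphabet or a smaller "local divisor" monoid, applies the induction hypothesis, and glues the resulting automata respecting the ordering and the loop-idempotent condition.

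**Deriving $(\mathsf{T}_1)$.** Given the $\varphi$-good automaton $\Aa$, I would eliminate states in increasing order of the total order: when eliminating a state $q$, replace each pair of transitions $p\xrightarrow{E_1}q$, $q\xrightarrow{E_2}r$ through $q$ (together with the loop expression $E_q^+$ at $q$, whose language maps to the idempotent $e_q$ by condition (v)) by a transition $p\xrightarrow{E_1\cdot E_q^+\cdot E_2 \,\cup\,E_1\cdot E_2}r$ or, more carefully, by keeping the union outside so that each summand is $\varepsilon$-free and good. The crucial points are: (a) unambiguity of $\Aa$ guarantees that the elimination produces unambiguous expressions at each stage — no two summands overlap; (b) because we eliminate in increasing order, when we remove $q$ the only loops at $q$ involve already-eliminated (lower) states, so the loop expression's language maps to the idempotent $e_q$, making $E_q^+$ satisfy condition (3) of goodness; (c) every subexpression that appears is built by $\cdot$ and $+$ from letter expressions along runs whose $\varphi$-image is constant, so condition (2) holds. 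After eliminating all of $Q\setminus\{\iota,f\}$, the single transition $\iota\to f$ carries a good expression whose language is $\Sigma^+$; splitting this (final) union according to the value $s=\varphi(w)$ gives the $F_s$, and $\varepsilon$-freeness is preserved since $\Aa$ reads at least one letter from $\iota$.

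**Deriving $(\mathsf{T}_2)$.** For infinite words I would analyze the unique (by unambiguity) run structure of $\Aa$ on an $\omega$-word. Since $\Aa$ is universal over $\Sigma^{\infty}$ and $f$ has no outgoing edges, an infinite run never reaches $f$; by finiteness of $Q$ some state $q$ is visited infinitely often, and one can take $q$ to be the minimum such state in the total order, so that eventually the run stays weakly above $q$ — hence, between successive visits to $q$, it stays strictly below $q$ except at the endpoints, and the corresponding factor maps to the idempotent $e_q$. This yields the decomposition $\Sigma^\omega = \bigcup_q L_q \cdot M_q^\omega$ where $L_q$ is the language of $\iota$-to-$q$ paths (mapping to various $s$, handled as in $(\mathsf{T}_1)$) and $M_q$ is the language of loops at $q$ staying strictly below $q$, which maps to the idempotent $e_q$. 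Eliminating all states below $q$ (in increasing order, as above) turns $L_q$ and $M_q$ into finite unions of $\varepsilon$-free good expressions, and distributing the unions over $\cdot$ and $(\cdot)^\omega$ gives the required form $\bigcup_k F_k\cdot G_k^\omega$; unambiguity of the $\omega$-expression follows from unambiguity of $\Aa$ together with the canonical choice of $q$ and of the visit-to-$q$ positions.

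**Main obstacle.** The hard part is the construction of the $\varphi$-good automaton in the inductive case $cS\subsetneq S$: maintaining simultaneously unambiguity, universality, the unique-initial/unique-final structure, the state order, and especially the loop-idempotent condition (v) while gluing together automata coming from the induction hypothesis is delicate, and (as the introduction notes) this is precisely the case where determinism must be sacrificed. The expression-extraction steps, by contrast, are essentially the classical state-elimination (Kleene/McNaughton–Yamada) argument, and the only real care needed there is to keep each summand $\varepsilon$-free and to check that eliminating states in the prescribed order keeps every introduced $E^+$ anchored at a state whose low-loops are idempotent-valued.
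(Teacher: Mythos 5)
Your proposal follows essentially the same route as the paper: build a $\varphi$-good automaton by induction on $(|S|,|\varphi(\Sigma)|)$ with the two base cases and the two inductive cases $Sc\subsetneq S$ and $cS\subsetneq S$, then extract the expressions by eliminating states in increasing order so that every introduced $E^{+}$ is a low-loop at the eliminated state and hence maps to its idempotent, and obtain \eqref{item:T2} by factoring each $\omega$-run at its topmost infinitely-recurring state. Two small points of care. First, in your \eqref{item:T2} argument you take $q$ to be the \emph{minimum} state visited infinitely often; that makes the tail of the run stay weakly \emph{above} $q$, which contradicts your next clause -- you need the \emph{maximum} such state, so that between successive visits to $q$ the run stays strictly below $q$ and the factors lie in $L_q\subseteq\varphi^{-1}(e_q)$ (this is what the paper does, summing over repeated states $r$ and using $F_{\iota,\da r,r}\cdot(F^{e_r}_{r,\da r,r})^{\omega}$). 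Second, splitting the union ``according to the value $s=\varphi(w)$'' only at the very end is not enough for condition (2) of goodness: the intermediate expressions must themselves carry a semigroup-element index, which is why the paper maintains a family $F^{s}_{p,X,q}$ with $\Lang{F^{s}_{p,X,q}}=L_{p,X,q}\cap\varphi^{-1}(s)$ and, when eliminating $r$, takes unions over all factorizations $s=s_1s_2$ and $s=s_1e_rs_2$; your remark that each summand should have constant $\varphi$-image indicates you intend this, but it is the one place where the bookkeeping genuinely matters.
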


The good regular expressions will be obtained using the classical translation of
automata to regular expressions by successive state eliminations.  To this aim,
the automaton should have several properties.  Mainly it should be unambiguous
and there should be a total order on states which is used in the state elimination.
We study these properties in the next section.

\section{Good Automata}\label{sec:good-automata}

Let $\A=(Q,\Sigma,\Delta,\iota,F,R,<)$ be an automaton where $Q$ is the finite
set of states, $\Sigma$ the alphabet, $\Delta\subseteq Q\times\Sigma\times Q$
the transition relation, $\iota\in Q$ is the initial state, $F,R\subseteq Q$ are
the subsets of final and repeated (B\"uchi) states, and $<$ is a total
order on $Q$.  For $p,q\in Q$ and $w\in\Sigma^{+}$ we write $p\xrightarrow{w}q$
when there is a run in $\A$ from $p$ to $q$ reading $w$.  We let $L_{p,q}$ be
the set of nonempty words $w\in\Sigma^{+}$ such that $p\xrightarrow{w}q$.  If
$X\subseteq Q$ then we write $p\xrightarrow{w}_X q$ if there is such a run where
all intermediary states are in $X$.  We let $L_{p,X,q}$ be the set of nonempty
words $w\in\Sigma^{+}$ such that $p\xrightarrow{w}_X q$.  Hence, we have
$L_{p,Q,q}=L_{p,q}$ and $L_{p,\emptyset,q}\subseteq\Sigma$.  We simply write
$L_q=L_{q,\da q,q}\subseteq L_{q,q}$ where $\da q=\{p\in Q\mid p<q\}$.

Let $\phi\colon\Sigma^{+}\to S$ be a semigroup\footnote{We may start from a
monoid morphism but during the induction we will have to consider semigroups.}
morphism.  The automaton $\A$ is $\phi$-good (or simply good) if it satisfies
the following properties:
\begin{enumerate}[nosep,label=$(\mathsf{G}_{\arabic*})$,ref=$\mathsf{G}_{\arabic*}$]
  \item\label{item:G1} $\A$ is unambiguous and universal (accepts all words).
  For each word $w\in \Sigma^{+}\cup\Sigma^{\omega}$ there is one and only one
  accepting run for $w$ in $\A$.
  
  \item\label{item:G2} For all $q\in Q$, there is an idempotent $e_q\in S$ such
  that $L_{q}\subseteq\phi^{-1}(e_q)$, i.e., all words in $L_{q}$ (if any) are
  mapped by $\phi$ to the same semigroup element $e_q$, which is an idempotent.
  
  \item\label{item:G3} The initial state $\iota$ has no 
  incoming transitions and is maximal: $q<\iota$ for all $q\in 
  Q\setminus\{\iota\}$.
  
  \item\label{item:G4} There is only one final state $F=\{f\}$ and $f$ has no 
  outgoing transitions. Moreover, the total order on 
  states satisfies $Q\setminus\{\iota,f\}) < f < \iota$.
\end{enumerate}
We say that $\A$ is weakly-good if it satisfies 
(\ref{item:G1},\ref{item:G2},\ref{item:G3}). 
\begin{lemma}\label{lem:wgood}	
From a \emph{weakly-good} automaton, we can construct an equivalent
\emph{good} automaton.
\end{lemma}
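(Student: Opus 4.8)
The plan is to add a single fresh state $f$ at the ``top'' of the order, just below $\iota$, and to redirect into $f$ every transition that in the original automaton leads to a final state. Concretely, given a weakly-good automaton $\A=(Q,\Sigma,\Delta,\iota,F,R,<)$, I would define $\A'=(Q',\Sigma,\Delta',\iota,\{f\},R,<')$ where $Q'=Q\uplus\{f\}$, $\Delta'=\Delta\cup\{(p,a,f)\mid \exists q\in F,\ (p,a,q)\in\Delta\}$, and $<'$ extends $<$ to $Q'$ by putting $q<'f$ for every $q\in Q\setminus\{\iota\}$ and $f<'\iota$. By construction $f$ has no outgoing transition, $f$ is the unique final state, and the order satisfies $Q'\setminus\{\iota,f\}<'f<'\iota$; moreover no new transition enters $\iota$ (all new transitions enter $f\neq\iota$), so $\iota$ still has no incoming transition and is still maximal. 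Hence $\A'$ satisfies \ref{item:G3} and \ref{item:G4}, and it remains to verify \ref{item:G1} and \ref{item:G2}.

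For \ref{item:G1}: since $f$ has no outgoing transition, it can occur in a run of $\A'$ only as its last state. Consequently, every infinite run of $\A'$ is an infinite run of $\A$ and conversely, and as $R$ is unchanged the accepting infinite runs of $\A'$ and of $\A$ on a given $w\in\Sigma^\omega$ coincide. For $w=w'a\in\Sigma^+$, an accepting run of $\A'$ on $w$ factors as a run $\iota\xrightarrow{w'}p$ in $\A$ followed by a transition $(p,a,f)\in\Delta'$, which exists exactly when $(p,a,q)\in\Delta$ for some $q\in F$; replacing the final state $f$ by such a $q$ turns it into an accepting run of $\A$ on $w$, and distinct accepting runs of $\A'$ (having distinct prefixes) yield distinct accepting runs of $\A$. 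Conversely, any accepting run of $\A$ on $w$ gives one of $\A'$ by replacing its last state by $f$. Since $\A$ is universal and unambiguous (each such $w$ has exactly one accepting run in $\A$), the same holds for $\A'$.

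For \ref{item:G2}: for $q=f$ there is no run starting at $f$, so $L_f=\emptyset$ and the condition is vacuous; similarly $L_\iota=\emptyset$ because $\iota$ has no incoming transition. For $q\in Q\setminus\{\iota\}$ we have $\da' q=\da q$, since $f$ lies above $q$ in $<'$; and a run from $q$ to $q$ through states of $\da q$ cannot visit $f$ (it is not in $\da q$) and therefore cannot use any new transition, so the set $L_{q,\da' q,q}$ computed in $\A'$ equals $L_q$ computed in $\A$. Hence the same idempotent $e_q$ witnesses \ref{item:G2} for $\A'$.

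The construction is routine; the only point requiring care is the placement of $f$ immediately below $\iota$. This placement is forced by \ref{item:G4}, and it is exactly what guarantees that $f$ is never an intermediary state of a loop $q\xrightarrow{}q$ for $q\neq\iota$, which is what makes both the \ref{item:G2} check and the unambiguity argument go through unchanged. (If $\iota\in F$ in $\A$, this affects only the empty word, which lies outside $\Sigma^+\cup\Sigma^\omega$, and is thus harmless.)
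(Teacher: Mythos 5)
Your construction is exactly the one in the paper: adjoin a fresh state $f$ with incoming copies of all transitions into $F$, make it the sole final state with no outgoing transitions, and insert it into the order just below $\iota$; the verification of \ref{item:G1}--\ref{item:G4} proceeds as you describe. The proposal is correct and matches the paper's proof, with somewhat more detail spelled out for the run bijection and for why $L_q$ is unchanged.
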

\begin{proof}
  Let $\A=(Q,\Sigma,\Delta,\iota,F,R,<)$ be a weakly-good automaton for the
  morphism $\phi\colon\Sigma^{+}\to S$.  Let $f\notin Q$ be a new state and let
  $Q'=Q\cup\{f\}$.  We define 
  $\A'=(Q',\Sigma,\Delta\cup\Delta',\iota,\{f\},R,<')$ as follows: $\Delta'$ is 
  the set of transitions $(q,a,f)$ such that there is a transition 
  $(q,a,q')\in\Delta$ with $q'\in F$. The ordering $<'$ coincides with $<$ on 
  $Q$ and satisfies $Q\setminus\{\iota\}<'f<'\iota$.
  
  Clearly, \eqref{item:G3} holds for $\A'$.  Notice that $f$ has no outgoing
  transitions, hence \eqref{item:G4} is satisfied.  Also, $L_f(\A')=\emptyset$
  and $L_q(\A')=L_q(\A)$ for all $q\in Q$, hence \eqref{item:G2} holds for
  $\A'$.  Finally, $\A$ and $\A'$ have the same infinite runs and there is a 
  bijection between the finite accepting runs of $\A$ and the finite accepting 
  runs of $\A'$.  We deduce easily that \eqref{item:G1} is satisfied.
\end{proof}

\begin{example}\label{ex:eg-good}
  Consider the morphism $\varphi\colon\Sigma^+ \to S=\{\alpha,\beta\}$ defined by
  $\varphi(a)=\alpha$ and $\varphi(b)=\beta$.  The product in $S$ is so that $\alpha$
  and $\beta$ are both right absorbing ($\alpha s=\alpha$ and $\beta s=\beta$ for all $s\in
  S$) and hence idempotents.  The automaton $\Aa$ (left in Figure~\ref{fig:eg-good})
  is $\varphi$-good.  The ordering on states is $n_b < n_a < f < \iota$,
  $F=\{f\}, R=\{n_a, n_b\}$.  The states $n_a, n_b$ determine the next symbol to
  be read as $a$ and $b$ respectively.  It is easy to see that \ref{item:G1} is
  true: consider a word $w \in \Sigma^{\infty}$.  For all $i>1$, the $i$th
  symbol of $w$ is $x \in \{a,b\}$ iff the $i$th state in the unique accepting
  run from $\iota$ is $n_x$.  By the ordering of states,
  $L_{\iota}=\emptyset=L_f$, $L_{n_b}=\{b\}\subseteq\varphi^{-1}(\beta)$ and
  $L_{n_a}=ab^* \subseteq \varphi^{-1}(\alpha)$.  Since $\alpha$ and $\beta$ are
  idempotents, \ref{item:G2} holds good.  \ref{item:G3} and \ref{item:G4} also
  hold good easily.
  Figure~\ref{fig:eg-good} also depicts on the right $\psi$-good automaton $\Bb$
  for the morphism $\psi\colon\{a,b\}^+ \to
  S=\{\alpha,\beta,\alpha\alpha,\alpha\beta,\beta\alpha,\beta\beta\}$ with
  $\psi(a)=\alpha$, $\psi(b)=\beta$ and the product in $S$ is so that the
  semigroup elements $\alpha\alpha,\alpha\beta,\beta\alpha,\beta\beta$ are right
  absorbant.  The repeated states of $\Bb$ are $R=\{n_{aa}, n_{ab}, n_{bb},
  n_{ba}\}$.  Notice that merging $n_{aa}, n'_{aa}$ (or $n_{bb}, n'_{bb}$)  
  violates \ref{item:G2} (the idempotents in $S$ are
  $\alpha\alpha,\alpha\beta,\beta\alpha,\beta\beta$; the merge will result in $a
  \in L_{n_{aa}}$, but $\psi(a)=\alpha$ is not idempotent.)
\end{example}

 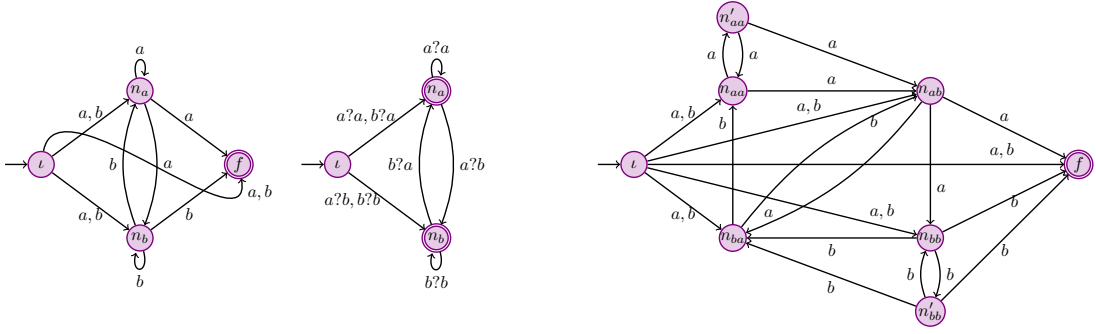
\begin{figure} [t]                                                                                                                                                           \begin{center}                                                                                                                                                         
  \scalebox{0.65}{                                                                                                                                                       
    \begin{tikzpicture}[->,thick]                                                                                                                                          
      \node[initial, cir, initial text ={}] at (-16,0) (A) {$\iota$} ;
      \node[cir] at (-14,1.5) (B) {$n_a$}; 
      \node[cir] at (-14,-1.5) (C) {$n_b$};
      \node[cir,accepting] at (-12,0) (D) {$f$};
      
            \path (A) edge node [above]{$a,b$}node {} (B);
            \path (A) edge node [below]{$a,b$}node  {} (C);
                  \path (B) edge[bend left=20] node [right] {$a$} node  {}(C);
                  \path (C) edge[bend left=20] node [left] {$b$} node  {}(B);
                  
           \path (B) edge[loop above] node[] {$a$} (B);
            \path(C) edge[loop below] node []{$b$} (C);
    \path (B) edge node [above] {$a$} node {}(D);
    \path (C) edge node [below] {$b$} node {}(D);
     \path (A) edge[in=-80,out=80] node [below, at end, anchor=north west]{$a,b$}node [below] {} (D);
    
        \node[initial, cir, initial text ={}] at (-10,0) (A) {$\iota$} ;
      \node[cir,accepting] at (-8,1.5) (B) {$n_a$}; 
      \node[cir,accepting] at (-8,-1.5) (C) {$n_b$};
      
            \path (A) edge node [left,pos=0.7]{$a?a,b?a$}node {} (B);
            \path (A) edge node [left,pos=0.5]{$a?b,b?b$}node  {} (C);
                  \path (B) edge[bend left=20] node [right] {$a?b$} node  {}(C);
                  \path (C) edge[bend left=20] node [left] {$b?a$} node  {}(B);
                  
           \path (B) edge[loop above] node[] {$a?a$} (B);
            \path(C) edge[loop below] node []{$b?b$} (C);

      \node[initial, cir, initial text ={}] at (-4,0) (A1) {$\iota$} ;
      \node[cir] at (-2,3) (B11) {$n'_{aa}$}; 
            \node[cir] at (-2,1.5) (B1) {$n_{aa}$}; 
      \node[cir] at (-2,-1.5) (C1) {$n_{ba}$};
      \node[cir] at (2,1.5) (B2) {$n_{ab}$}; 
      \node[cir] at (2,-1.5) (C2) {$n_{bb}$};
      \node[cir] at (2,-3) (C22) {$n'_{bb}$};
             \node[cir,accepting] at (5,0) (D1) {$f$};
      
           \path (A1) edge node [above]{$a,b$}node {} (B1);
            \path (A1) edge node [below]{$a,b$}node  {} (C1);
             \path (A1) edge node [above,pos=0.6]{$a,b$}node {} (B2);
            \path (A1) edge node [above,pos=0.87]{$a,b$}node  {} (C2);
            \path (A1) edge node [above,pos=0.85]{$a,b$}node {} (D1);
              \path (B1) edge node [above] {$a$} node  {}(B2);
           \path (B1) edge[bend left=20] node[left] {$a$} (B11);
           \path (B11) edge[bend left=20] node[right] {$a$} (B1);
           \path (B11) edge node[above] {$a$} (B2);

            \path (B2) edge node[right,pos=0.7]   {$a$} node  {}(C2);
            \path (B2) edge[bend left=15] node [above,pos=0.9] {$a$} node  {}(C1);
           \path (C2) edge[bend left=20] node[right] {$b$} (C22);
           \path (C22) edge[bend left=20] node[left] {$b$} (C2);
           \path (C22) edge node[below] {$b$} (C1);
           \path (C22) edge node[below] {$b$} (D1);

            \path (C2) edge node[below] {$b$} node  {}(C1);
            \path (C1) edge[bend left=15] node [right,pos=0.75] {$b$} node  {}(B2);
            \path (C1) edge node [left,pos=0.85] {$b$} node  {}(B1);
            \path (B2) edge node [above] {$a$} node  {}(D1);
           \path (C2) edge node[right] {$b$} (D1);

    \end{tikzpicture}
  }
  \caption{On the left, a $\varphi$-good automaton $\Aa$ for $\varphi\colon
  \{a,b\}^+ \to \{\alpha,\beta\}$, $\varphi(a)=\alpha$, $\varphi(b)=\beta$ and
  $xy=x$ for all $x,y\in S$.  In the middle, a weakly-good automaton for $\phi$ 
  which is deterministic and complete with one letter look-ahead. Here, the 
  label $x?y$ means reading $x$ with look-ahead $y$. On the right,
  is automaton $\Bb$ which is $\psi$-good for the morphism $\psi\colon\{a,b\}^+
  \to \{\alpha,\beta,\alpha\alpha,\alpha\beta,\beta\alpha,\beta\beta\}$ with
  $\psi(a)=\alpha$, $\psi(b)=\beta$ and $xyz=xy$ for all
  $x,y,z\in\{\alpha,\beta\}$. In all the figures, we use double circle to denote final states.
   }                                                                                                                                         
   \label{fig:eg-good}                                                                                                                                                        
\end{center}                                                                                                                                                         
\end{figure}

We now move towards the main result.  Let 
 ${\phi\colon\Sigma^{+}\to S}$ be a semigroup morphism. 
\begin{theorem}\label{thm:good-automaton}
  Given $\varphi$ as above,  we can construct a 
  $\phi$-\emph{good} automaton $\A_\phi$.
\end{theorem}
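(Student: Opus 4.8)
The plan is to establish the statement for \emph{weakly-good} automata, i.e.\ automata satisfying only (\ref{item:G1},\ref{item:G2},\ref{item:G3}), and then to invoke Lemma~\ref{lem:wgood} to upgrade to a good automaton. A weakly-good automaton is built by induction on the pair $(|S|,|\varphi(\Sigma)|)$ ordered lexicographically, following Wilke~\cite{Wilke_1999} (a technique close to the local divisor method of \cite{Diekert_2016}). Since a weakly-good automaton for the morphism $\varphi_0$ induced on the alphabet $\varphi(\Sigma)$ yields one for $\varphi$ by splitting each transition label back into the letters of $\Sigma$ with the corresponding image — which clearly preserves (\ref{item:G1},\ref{item:G2},\ref{item:G3}) — we may assume $\varphi$ is injective on letters, whence $|\Sigma|=|\varphi(\Sigma)|$. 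There are two base cases. If $\varphi(\Sigma)=\{c\}$ is a singleton, fix $n\geq 1$ with $c^n$ idempotent and take the deterministic complete automaton on states $\iota,q_1,\dots,q_n$ with transitions $\iota\xrightarrow{b}q_1$, $q_i\xrightarrow{b}q_{i+1}$ for $1\leq i<n$, and $q_n\xrightarrow{b}q_1$, for all $b\in\Sigma$; order $q_1<\dots<q_n<\iota$ and make all $q_i$ final and repeated. Determinism and completeness give \eqref{item:G1}, the state $\iota$ is fresh and maximal so \eqref{item:G3} holds, and the ordering forces $L_{q_i}=\emptyset$ for $i<n$ while $L_{q_n}=\{w\in\Sigma^+\mid n\text{ divides }|w|\}\subseteq\varphi^{-1}(c^n)$ with $c^n$ idempotent, giving \eqref{item:G2}. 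If $S$ is a group, take the right Cayley automaton: states $\{\iota\}\cup S$, transitions $\iota\xrightarrow{b}\varphi(b)$ and $s\xrightarrow{b}s\varphi(b)$, with $\iota$ maximal and $S$ ordered arbitrarily, all of $S$ final and repeated; it is deterministic and complete, so \eqref{item:G1} and \eqref{item:G3} hold, and since $S$ is a group one has $L_{s,s}\subseteq\varphi^{-1}(\unit)$ for every $s$, hence $L_s\subseteq\varphi^{-1}(\unit)$ and \eqref{item:G2} holds for \emph{any} ordering.

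For the inductive step, assume $|\Sigma|=|\varphi(\Sigma)|\geq 2$ and $S$ is not a group; then some letter $a$ has $c:=\varphi(a)$ a non-unit of $S$, so that $Sc\subsetneq S$ or $cS\subsetneq S$. Put $T=\Sigma\setminus\{a\}$, so that every $w\in\Sigma^\infty$ factors uniquely as $w=u_0\,a\,u_1\,a\,u_2\cdots$ with each $u_i\in T^*$ (finitely or infinitely many occurrences of $a$; if finitely many, the last block ranges over $T^*\cup T^\omega$). By induction there is a weakly-good automaton $\C$ for $\varphi$ restricted to $T^+$: either $\varphi(T^+)\subsetneq S$, which strictly decreases the first coordinate, or $\varphi(T^+)=S$ and $|\varphi(T)|=|T|<|\Sigma|$, which decreases the second. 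The blocks $a\,u_i$ are abstracted by their value in a \emph{local divisor} $S_c$ of $S$ at $c$ — a semigroup carried, up to a twisted product, by $cS\cap Sc$ — for which $|S_c|<|S|$ \emph{precisely because} $Sc\subsetneq S$ or $cS\subsetneq S$; this yields a morphism $\psi$ from a finite macro-alphabet into $S_c$ that faithfully records the behaviour of $\varphi$ on the region between consecutive occurrences of $a$, and induction (applying the letter-injectivity reduction to $\psi$ as well) gives a weakly-good automaton $\B$ for $\psi$. The automaton $\A_\varphi$ is then obtained by plugging copies of $\C$ into the transitions of $\B$: a state of $\A_\varphi$ pairs a state of $\B$ with a state of $\C$ scanning the current $a$-free block; reading a letter of $T$ advances the $\C$-component, while reading $a$ closes the current block (its $\varphi$-value, recovered from the $\C$-run, selects a macro-letter), fires the corresponding transition of $\B$, and re-initialises the $\C$-component. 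Order the states of $\A_\varphi$ lexicographically with the $\B$-component dominant; then a run returning to a state and staying strictly below it either never reads $a$ — it is then a loop of $\C$ mapped by $\varphi$ to an idempotent of $S$ — or it does, in which case its $\B$-projection is a loop of $\B$ mapped by $\psi$ to an idempotent of $S_c$, which pulls back through the divisor to an idempotent of $S$; a short computation combining these gives \eqref{item:G2}. For \eqref{item:G1}, the unique accepting run of $w$ is assembled from the unique accepting $\B$-run on the macro-word together with the unique accepting $\C$-runs on the individual blocks, with B\"uchi acceptance supplied by $\B$ when $w$ has infinitely many occurrences of $a$ and by $\C$ when $w$ eventually stays inside a single infinite block; \eqref{item:G3} is immediate from the construction.

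The step I expect to be the main obstacle is the sub-case $cS\subsetneq S$. When $Sc\subsetneq S$, the handling of the left context $u_0$ and of the block boundaries can be arranged so that the plugged automaton is deterministic relative to $\B$ — so that, for instance, when $S$ is commutative one never leaves this sub-case and obtains a deterministic weakly-good automaton — but when $cS\subsetneq S$ recovering the value of the current block and deciding when and how to close it genuinely depend on an unbounded amount of look-ahead, and one can in fact show that no deterministic weakly-good automaton exists, so the construction must be nondeterministic. The delicate points are then: (a) defining the macro-alphabet and the twisted product on $S_c$ so that $\psi$ encodes $\varphi$ faithfully on the inter-$a$ blocks while $|S_c|<|S|$; (b) proving \eqref{item:G1} for the nondeterministic $\A_\varphi$, i.e.\ that the assembled run is the \emph{unique} accepting run, with the correct B\"uchi acceptance for every $w\in\Sigma^\omega$; and (c) verifying that $\circ$-idempotents of $S_c$ correspond to genuine idempotents of $S$, so that \eqref{item:G2} transfers. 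Applying Lemma~\ref{lem:wgood} to the weakly-good automaton obtained finally produces a good automaton $\A_\varphi$, which proves the theorem.
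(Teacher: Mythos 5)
Your inductive skeleton matches the paper's: the same lexicographic induction on $(|S|,|\varphi(\Sigma)|)$, the same two base cases (group, singleton image), the same wrap-up observation that if no generator $c$ has $Sc\subsetneq S$ or $cS\subsetneq S$ then $\varphi(\Sigma^+)$ is a group, and the same diagnosis of where nondeterminism becomes unavoidable. But the inductive step diverges at the crucial point, and the divergence is where your argument has a genuine gap. You abstract the blocks by their value in a \emph{local divisor} $S_c$ carried by $cS\cap Sc$ with a twisted product $xc\circ cy=xcy$. The paper deliberately does \emph{not} do this: it takes the macro-alphabet $B=\varphi(\Sigma\Sigma_1^*\Sigma_2)\subseteq Sc$ (blocks \emph{ending} with a $\Sigma_2$-letter when $Sc\subsetneq S$, resp.\ $B=\varphi(\Sigma_2\Sigma\Sigma_1^*)\subseteq cS$ with blocks \emph{beginning} with one when $cS\subsetneq S$) and the evaluation morphism $\psi(b)=b$ into the plain subsemigroup $Sc$ (resp.\ $cS$) with the \emph{original} product. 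This is exactly what makes \eqref{item:G2} transfer for free: an idempotent of $Sc$ is an idempotent of $S$, and a loop at a $Q_2$-state reads a concatenation of complete blocks whose $\varphi$-value literally equals the $\psi$-value of the macro-factor. It is the reason the paper needs the stronger hypothesis $Sc\subsetneq S$ or $cS\subsetneq S$ (hence Lemma~\ref{lem:semi-group}) rather than mere non-invertibility of $c$.

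Your delicate point (c) is not merely delicate — as stated it is false, and the construction breaks there. A $\circ$-idempotent $e=xc=cy$ of the local divisor satisfies $xcy=e$, which does not make $e$ a $\cdot$-idempotent of $S$. Worse, with blocks of the form $u_ia$ valued $c\varphi(u_i)c$ (or any of the standard overlapping conventions), the word read around a loop at a $\mathcal{B}$-state is some $v=u_ia\cdots u_ja$ for which only $c\varphi(v)$ (or $\varphi(v)c$) equals the $\circ$-idempotent; $\varphi(v)$ itself is neither constant over the loop language nor idempotent (one only gets stabilization of the form $\varphi(u_i)\cdot e$ after two blocks). So "a short computation combining these gives \eqref{item:G2}" does not go through; this is precisely why the local-divisor route, which suffices for the factorization-forest bound where one may insert extra binary levels, does not directly yield the paper's condition $L_q\subseteq\varphi^{-1}(e_q)$ with $e_q$ idempotent in $S$. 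Secondarily, your points (a) and (b) — the faithful encoding at block boundaries (including the prefix $u_0$ and the three ways a word can end) and the uniqueness of the assembled accepting run for the nondeterministic automaton of the $cS\subsetneq S$ case — constitute the bulk of the paper's Lemmas~\ref{lem:case1-good} and~\ref{lem:case2-good} and are only asserted here. A smaller quibble: "$\varphi(a)$ a non-unit, so that $Sc\subsetneq S$ or $cS\subsetneq S$" is not an implication for an arbitrary semigroup element; the correct statement is the contrapositive of Lemma~\ref{lem:semi-group} applied after restricting $S$ to $\varphi(\Sigma^+)$.
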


The proof is by induction on $(|S|,|\phi(\Sigma)|)$ with lexicographic ordering. 
Wilke \cite{Wilke_1999} used this kind of induction while obtaining a temporal
logic formula from a counter-free $\omega$-automata.
The survey of Kufleitner and Diekert on local divisor technique
\cite{Diekert_2016} uses a similar induction to prove Simon's Forest
factorisation theorem. See also the survey \cite{DiGa08Thomas} where the local 
divisor technique was used to obtain an LTL formula from an aperiodic monoid.

\subsection*{Base Cases}
A first basic case is when $S$ is a group, which is in particular the case when 
$|S|=1$.

\begin{lemma}\label{lem:group}
  If $S$ is a group, we can construct a deterministic and complete
  \emph{weakly-good} automaton for the morphism $\phi\colon\Sigma^{+}\to S$.
\end{lemma}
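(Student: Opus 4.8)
The plan is to build an automaton with one state per group element, where the state after reading a prefix $w$ records $\phi(w)$. Concretely, take $Q=S$ (identified with a copy disjoint from whatever we need), pick the initial state to be $\iota$ a fresh state with no incoming transitions (as required by \eqref{item:G3}), and add transitions $\iota\xrightarrow{a}\phi(a)$ and $s\xrightarrow{a}s\cdot\phi(a)$ for every $s\in S$ and $a\in\Sigma$. Every state is final and every state is repeated ($F=R=Q$, or $F=R=Q\setminus\{\iota\}$ together with a final check on $\iota$ for the empty word — but since $\phi$ is a semigroup morphism defined on $\Sigma^+$, nonempty words are what matter, so it is cleanest to make all of $S$ both final and Büchi and handle $\varepsilon$ separately or not at all). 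This automaton is deterministic and complete by construction: from any state, each letter has exactly one successor. Hence \eqref{item:G1} holds, since a deterministic complete automaton with all states accepting (and all states Büchi) has exactly one run on every word in $\Sigma^\infty$, and that run is accepting.

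The key remaining point is \eqref{item:G2}: for every state $q$ we need $L_q\subseteq\phi^{-1}(e_q)$ for some idempotent $e_q$. Here we must choose the total order $<$ on $Q$. I would order the states so that $\iota$ is maximal (forced by \eqref{item:G3}) and below it put the elements of $S$ in any linear order; the crucial observation is that $L_q=L_{q,\da q,q}$, the set of nonempty words labelling a run from $q$ back to $q$ through strictly smaller states only. For the maximal non-initial state $q=s_{\max}$ this is simply $L_{s_{\max},s_{\max}}$, all words $w$ with $s_{\max}\cdot\phi(w)=s_{\max}$, i.e.\ $\phi(w)=\unitS$ since $S$ is a group; so every such word maps to $\unitS$, which is idempotent, and $e_{s_{\max}}=\unitS$. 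For a general state $s$, any word $w\in L_s$ satisfies $s\cdot\phi(w)=s$, hence again $\phi(w)=\unitS$; so in fact $L_s\subseteq\phi^{-1}(\unitS)$ for every $s\in S$, and we may take $e_s=\unitS$ uniformly. This is precisely where the group hypothesis is used — cancellativity forces any loop at a state to have label mapping to the identity — and it makes \eqref{item:G2} immediate regardless of the chosen order. For $q=\iota$, $L_\iota=\emptyset$ because $\iota$ has no incoming transition, so \eqref{item:G2} is vacuous there.

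I do not expect any step to be a genuine obstacle here: the construction is the obvious ``syntactic'' automaton of the group, determinism and completeness are built in, universality and unambiguity follow from making all states accepting in a deterministic complete automaton, and the idempotent condition \eqref{item:G2} collapses to the triviality that every cycle in a group automaton reads the identity. The only mild care needed is bookkeeping: ensuring $\iota$ is a genuinely new state distinct from all of $S$ (so that \eqref{item:G3}'s ``no incoming transitions'' can hold — the identity element of $S$, were it reused as $\iota$, would have incoming edges), fixing the Büchi set so that infinite runs are accepting, and noting that we are building a \emph{weakly}-good automaton, so \eqref{item:G4} (the unique final state $f$) is not required at this stage — it will be supplied afterwards by Lemma~\ref{lem:wgood}.
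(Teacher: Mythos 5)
Your construction is identical to the paper's: states $S\uplus\{\iota\}$, transitions $\iota\xrightarrow{a}\phi(a)$ and $s\xrightarrow{a}s\cdot\phi(a)$, all of $S$ final and B\"uchi, with \eqref{item:G1} from determinism and completeness and \eqref{item:G2} from the fact that any loop label must map to $\unitS$ (the paper phrases this as $L_{s,s}=\phi^{-1}(1_S)$ via $L_{s,t}=\phi^{-1}(s^{-1}t)$). The argument is correct; the paper simply fixes $F=R=S$ where you briefly hedge, which changes nothing.
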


\begin{proof}
  We let $Q=S\uplus\{\iota\}$.  The initial state is $\iota$.  All other states
  are accepting: $F=R=S$.  The deterministic transition function is defined by
  $\iota\xrightarrow{a}\phi(a)$ and $s\xrightarrow{a}s\cdot\phi(a)$ for all
  $s\in S$ and $a\in\Sigma$.  \eqref{item:G1} holds trivially since the
  automaton is deterministic and complete.  Now, for $s,t\in S$, we check that
  $L_{s,t}=\phi^{-1}(s^{-1}t)$.  In particular, if $s=t$ then
  $L_{s,s}=\phi^{-1}(1_S)$ where $1_S$ is the unit of $S$ which is indeed
  idempotent.  We deduce that \eqref{item:G2} is also satisfied whichever total
  order $<$ is chosen on $Q$. We assume $S<\iota$ so that \eqref{item:G3} is 
  also satisfied.  
\end{proof}

\begin{example}
	As an example illustrating Lemma \ref{lem:group}, consider the morphism
	$\varphi\colon \{a,b\}^{+} \to S$, where $S=((\mathbb{Z}/2\mathbb{Z})^2,+)$,
	the group of pairs $(x,y) \in \{0,1\}^2$ with component wise
	addition, defined by $\varphi(a)=(1,0)$ and $\varphi(b)=(0,1)$.  (0,0) is
	the unit element.  The $\varphi$-weakly good automaton $\mathcal{A}$ is given
	in Figure~\ref{fig:group}.  Since the automaton is deterministic and complete,
	\eqref{item:G1} is easy.  To see \eqref{item:G2}, observe that for any state
	$q$, $L_{q,q}$ is the set of all words with even number of $a$s and $b$s, and
	indeed, $\varphi(L_{q,q})$ is (0,0), the unit element.  This shows that $L_q
	\subseteq L_{q,q} \subseteq \varphi^{(-1)}((0,0))$, satisfying
	\eqref{item:G2}.  Finally, \eqref{item:G3} holds trivially by construction on
	choosing an ordering of states respecting $q < \iota$ for all $q \neq \iota$.
\end{example}

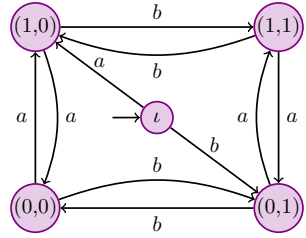
\begin{figure} [t]                                                                                                                                                           \begin{center}                                                                                                                                                         
  \scalebox{0.8}{                                                                                                                                                       
    \begin{tikzpicture}[->,thick]                                                                                                                                          
      \node[initial, cir, initial text ={}] at (-12,-1.5) (A) {$\iota$} ;
      \node[cir] at (-14,0) (B) {(1,0)}; 
      \node[cir] at (-14,-3) (C) {(0,0)};
      \node[cir] at (-10,0) (D) {(1,1)};
      \node[cir] at (-10,-3) (E) {(0,1)};
      
            \path (A) edge node [above]{$a$}node [below] {} (B);
            \path (A) edge node [above]{$b$}node [below] {} (E);
            
      \path (B) edge[bend left=20] node [right] {$a$} node {}(C);
      \path (B) edge node [above] {$b$} node  {}(D);
      
    \path (C) edge node [left] {$a$} node {}(B);
      \path (C) edge[bend left=20] node [above] {$b$} node  {}(E);

    \path (D) edge node [right] {$a$} node [right] {}(E);
      \path (D) edge[bend left=20] node [below] {$b$} node {}(B);
      
      \path (E) edge node [below] {$b$} node [above] {}(C);
      \path (E) edge[bend left=20] node [left] {$a$} node{}(D);

    \end{tikzpicture}
  }
  \caption{The $\varphi$-weakly good automaton $\mathcal{A}$, where 
  $\varphi\colon \{a,b\}^{+} \to S$,  $S$ is the group $((\mathbb{Z}/2\mathbb{Z})^2,+)$. 
  It is easy to see that for all states $q$, $L_{q,q} =\varphi^{-1}(0,0)$, and (0,0) is the unit. 
  }                                                                                                                                       
  \label{fig:group}                                                                                                                                                        
\end{center}                                                                                                                                                         
\end{figure} 
The second basic case is when all letters from $\Sigma$ are mapped to the same 
semigroup element, i.e., when $|\phi(\Sigma)|=1$.

\begin{lemma}\label{lem:one-generator}
  If all letters are mapped to the same semigroup element $s$, i.e.,
  $\phi(\Sigma)=\{s\}$, we can construct a deterministic and complete
  \emph{weakly-good} automaton for the morphism $\phi\colon\Sigma^{+}\to S$.
\end{lemma}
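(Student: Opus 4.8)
The plan is to use the fact that $\phi$ factors through the monogenic subsemigroup $\langle s\rangle\subseteq S$, which is a finite cyclic semigroup. Fix any $a\in\Sigma$ and put $s=\phi(a)$; by assumption this is independent of the chosen $a$, so $\phi(w)=s^{|w|}$ for every $w\in\Sigma^{+}$. Since $S$ is finite, there is some $m\ge 1$ for which $e:=s^{m}$ is idempotent (for instance, $m$ can be taken to be the least multiple of the period of $s$ that is at least its index, in which case $m\le|\langle s\rangle|\le|S|$). The only property needed is that, $e$ being idempotent, $s^{km}=(s^{m})^{k}=e$ for every $k\ge 1$.

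I would then take $\A$ to be the cycle of length $m$: the state set is $Q=\{\iota\}\uplus\{q_{0},q_{1},\dots,q_{m-1}\}$, the transitions are $\iota\xrightarrow{a}q_{1\bmod m}$ and $q_{j}\xrightarrow{a}q_{(j+1)\bmod m}$ for every $a\in\Sigma$ and every $j$, so that $\A$ is deterministic and complete and the run of a nonempty word $w$ from $\iota$ ends in $q_{|w|\bmod m}$; the final and repeated states are $F=R=\{q_{0},\dots,q_{m-1}\}$, and the order is $q_{0}<q_{1}<\dots<q_{m-1}<\iota$.

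It remains to verify the three weakly-good conditions. Condition \eqref{item:G3} is immediate: no transition enters $\iota$, and $\iota$ is the maximum of $<$. For \eqref{item:G1}, since $\A$ is deterministic and complete every word has exactly one run, and that run is accepting because from the first letter on it stays inside $F=R$; hence every word in $\Sigma^{+}\cup\Sigma^{\omega}$ has a unique accepting run, so $\A$ is unambiguous and universal. For \eqref{item:G2}: $L_{\iota}=\emptyset$, so there is nothing to check at $\iota$; and for each $q_{j}$, the underlying graph on $\{q_{0},\dots,q_{m-1}\}$ is a simple directed cycle of length $m$, so any run from $q_{j}$ back to $q_{j}$ has length a multiple of $m$. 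In particular every $w\in L_{q_{j}}\subseteq L_{q_{j},q_{j}}$ has $|w|=km$ for some $k\ge 1$, whence $\phi(w)=s^{km}=e$. Thus $L_{q}\subseteq\phi^{-1}(e)$ for all $q\in Q$ with $e$ idempotent, and \eqref{item:G2} holds.

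I do not expect a genuine difficulty here; the single point to get right is the choice of the cycle length. The naive automaton mimicking the index and period of $s$ — a chain $q_{1}\to\cdots\to q_{i+p-1}$ with a back edge $q_{i+p-1}\to q_{i}$ — is deterministic and complete but fails \eqref{item:G2} when $p<i$, since its shortest loop has length $p$ and $s^{p}$ need not be idempotent. Closing the cycle at length $m$, the index of the idempotent $e=s^{m}$, is exactly what forces the $\phi$-image of every loop to be $e$.
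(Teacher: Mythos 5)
Your proof is correct and rests on exactly the same idea as the paper's: a deterministic counter that cycles with period $n$, the least exponent for which $s^{n}$ is idempotent, so that every loop reads a word of length a multiple of $n$ and hence maps to the idempotent $s^{n}$. The only difference is cosmetic: the paper prepends a tail of length $k$ (the index of $s$) before closing the cycle, so its states additionally record $\phi(w)=s^{|w|}$ exactly, but this is not needed for weak goodness and your pure cycle on $m+1$ states is a valid (and slightly smaller) alternative; your closing remark about why a cycle of length equal to the period would violate \eqref{item:G2} is precisely the point the paper's choice of cycle length is making.
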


\begin{proof}
  Since $S$ is finite, there are integers $k,\ell\geq 1$ such that 
  $s^{k}=s^{k+\ell}$. We fix the least such pair for the lexicographic order.
  Also, since $S$ is finite, we find $n\geq 1$ such that $s^{n}$ is an 
  idempotent. Again, we fix the least such $n$. It is easy to see that $k \leq n$. 
  Also, $n<k+\ell$ by minimality of $n$, since otherwise we have 
  $s^{n}=s^{n-\ell}$.
  Further, from $s^{n}=s^{n+n}$ we deduce that $\ell$ divides $n$.
  
  Now, we define the automaton.  The set of states is $Q=\{0,1,\ldots,k+n-1\}$ and
  the initial state is $\iota=0$.  All states are accepting: $F=R=Q$.
  The deterministic and complete transition function is defined as expected: for
  all $a\in\Sigma$ and $i\in Q$ we let $i\xrightarrow{a}j$ where $j=i+1$ if
  $i<k+n-1$ and $j=k$ otherwise (see Figure~\ref{fig:all-one}). 
   
  \eqref{item:G1} holds trivially since the automaton is deterministic and
  complete.  Now, $L_{i,i}\neq\emptyset$ if and only if $k\leq i\leq k+n-1$ and
  in this case $L_{i,i}=(\Sigma^{n})^{+}\subseteq\phi^{-1}(s^{n})$ since $s^{n}$
  is an idempotent.  Therefore, \eqref{item:G2} holds. For \eqref{item:G3} to 
  hold, we simply have to consider a total order where $\iota$ is maximal.
\end{proof}

\begin{example}
  As an example illustrating Lemma~\ref{lem:one-generator}, consider the
  morphism $\varphi\colon \{a,b\}^{+} \to S$ defined by
  $\varphi(a)=\varphi(b)=s$ where $S$ is the finite semigroup $S=\{s,s^2, s^3,
  s^4\}$ with $s^5=s^3$.  We have $k=3$ and $\ell=2$ and the idempotent
  is $s^n=s^4$.  The automaton $\mathcal{A}$ for $\varphi$ is in
  Figure~\ref{fig:all-one}.
\end{example}

\begin{figure} [t]                                                                                                                                                           \begin{center}                                                                                                                                                         
  \scalebox{0.8}{                                                                                                                                                       
    \begin{tikzpicture}[->,thick]                                                                                                                                          
      \node[initial, cir1, initial text ={}] at (-12,-2) (A) {0} ;
      \node[cir1] at (-11,-2) (B) {1}; 
      \node[cir1] at (-10,-2) (C) {2};
      \node[cir1] at (-9,-2) (D) {3};
      \node[cir1] at (-8,-2) (E) {4};
      \node[cir1] at (-7,-2) (F) {5};
      \node[cir1] at (-6,-2) (G) {6};
       \path (A) edge node [above]{$\Sigma$}node {}(B);
       \path (B) edge node [above]{$\Sigma$}node {}(C);
        \path (C) edge node [above]{$\Sigma$}node{} (D);
       \path (D) edge node [above]{$\Sigma$}node {}(E);
       \path (E) edge node [above]{$\Sigma$}node{} (F);
       \path (F) edge node [above]{$\Sigma$}node {}(G);
       \path (G) edge[bend left=40] node [above]{$\Sigma$}node{} (D);
    \end{tikzpicture}
  }
  \caption{The $\varphi$-weakly good automaton $\mathcal{A}$ where 
  $\varphi\colon \{a,b\}^{+} \to S$, $S=\{s,s^2,s^3,s^4\}$, and 
  $\varphi(a)=\phi(b)=s$. All states are accepting.}                                                                                                                                       
  \label{fig:all-one}                                                                                                                                                        
\end{center}                                                                                                                                                         
\end{figure}
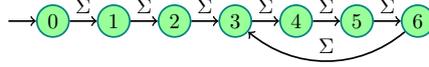 

\subsection*{Inductive Steps}
The other two cases are inductive.  First, assume that there is some semigroup
element $c\in\phi(\Sigma)$ such that $Sc \subsetneq S$.  Then $(Sc,\cdot)$ is a
strict subsemigroup\footnote{Notice that, if
$S$ is a monoid with unit $1_S$ then $1_S\notin Sc$ (otherwise $Sc=S$).  Hence
$Sc$ is not a submonoid of $S$.  Moreover, $Sc$ may not have a unit element.
This is why we consider semigroup morphisms.  Another possibility would be the
local divisor technique described in \cite{Diekert_2016} which allows to get a
smaller monoid.}  of $(S,\cdot)$, i.e., $|Sc|<|S|$.
Let $\Sigma_2=\Sigma\cap\phi^{-1}(c)$ be the set of all letters
mapped to $c$ and $\Sigma_1=\Sigma\setminus\Sigma_2$.  If $\Sigma_1=\emptyset$
then we are in the second basic case above.  Hence we assume
$\Sigma_1\neq\emptyset$ and since $c\in\phi(\Sigma)\setminus\phi(\Sigma_1)$
we have $|\phi(\Sigma_1)|<|\phi(\Sigma)|$ so by induction hypothesis
we can construct a \emph{good} automaton
$\A_1=(Q_1,\Sigma_1,\Delta_1,\iota_1,F_1,R_1,<_1)$ for the morphism $\phi$
restricted to $\Sigma_1$.
Each nonempty word $w\in\Sigma^{+}\cup\Sigma^{\omega}$ has a unique 
factorization 
$w=(a_1 u_1 c_1)(a_2 u_2 c_2)(a_3 u_3 c_3)\cdots$
with $a_i\in\Sigma$, $u_i\in\Sigma_1^{*}\cup\Sigma_1^{\omega}$ and
$c_i\in\Sigma_2$.  If the word $w\in\Sigma^{+}$ is finite then the factorization
has $n\geq 1$ blocks, the last block being either $a_nu_n$ or $a_nu_nc_n$.  If
$w\in\Sigma^{\omega}$ is infinite, the factorization has infinitely many
blocks when $w$ has infinitely many letters from $\Sigma_2$, otherwise the
factorization has $n\geq 1$ blocks and the last block is $a_nu_n$ with
$u_n\in\Sigma_1^{\omega}$.

We view $B=\phi(\Sigma\Sigma_1^{*}\Sigma_2)\subseteq Sc$ as an alphabet and we
consider the evaluation semigroup morphism $\psi\colon B^{+}\to Sc$ defined by 
$\psi(b)=b$ for all $b\in B\subseteq Sc$. Let 
$b_i=\phi(a_iu_ic_i)\in B$. The factorization of $w$ yields the word 
$b_1b_2b_3\cdots$ over $B$. Moreover, for $i\leq j$ we have $\psi(b_i\cdots 
b_j)=\phi(a_iu_ic_i\cdots a_ju_jc_j)$.
Since $|Sc|<|S|$, we can construct a \emph{good} automaton
$\A_2=(Q_2,B,\Delta_2,\iota_2,F_2,R_2,<_2)$ for the morphism $\psi\colon
B^{+}\to Sc$ by induction hypothesis.

\begin{example}
  We give an example illustrating the first inductive case $Sc \subsetneq S$.
  Consider the finite semigroup $S=\{s,s^2,s^3,s^4\}$ with $s^5=s^3$ and $s^4$
  idempotent.  Consider the morphism $\varphi\colon\{a,b\}^{+} \to S$ with
  $\varphi(a)=s$, $\varphi(b)=s^2$.  Choosing $c=s^2$, we see that $Ss^2=\{s^3,
  s^4\} \subsetneq S$.  It can be seen that $Ss^2$ is a group with unit element
  $s^4$.  Considering $\Sigma_2=\{b\}$ and $\Sigma_1=\{a\}$ we have
  $\varphi(\Sigma_2)=s^2$, and $s^2 \notin \varphi(\Sigma_1)$.  The inductive
  hypothesis applies to $\varphi_1 \colon \Sigma_1^{+} \to S$; since $\Sigma_1$
  is a singleton, the $\varphi_1$-good automaton $\mathcal{A}_1$ can be
  constructed as in Lemma~\ref{lem:one-generator} (see
  Figure~\ref{fig:induct-one}).  Also, considering $B=\varphi(\Sigma \Sigma_1^*
  \Sigma_2)=\varphi(a^+b) \cup \varphi(ba^*b)=\{s^3,s^4\}=Ss^2$, we have the
  evaluation morphism $\psi\colon B^+ \to Ss^2$, given by $\psi(s^{3})=s^{3}$
  and $\psi(s^{4})=s^{4}$.  The inductive hypothesis applies to $\psi$: in
  particular, the $\psi$-good automaton $\mathcal{A}_2$ can be constructed as in
  Lemma~\ref{lem:group} (see Figure~\ref{fig:induct-one}).

  We explain below how to construct a weakly-good automaton $\Aa$ for $\phi$
  from $\Aa_1$ and $\Aa_2$.  Consider the word $w=(aab)(baaab)(ab) \in (\Sigma
  \Sigma_1^* \Sigma_2)^+$.  Let $b_1=s^3$, $b_2=s^4$.  Then
  $\varphi(w)=b_2b_1b_1 \in B^+$.  Figure~\ref{fig:induct-one} depicts some
  example runs of $\Aa$.
\end{example}

\begin{figure} [t]                                                                                                                                                           
  \begin{center}                                                                                                                                                         
  \scalebox{0.8}{                                                                                                                                                       
    \begin{tikzpicture}[->,thick]                                                                                                                                          
    
      \node[initial, cir1, initial text ={}] at (-12,-2) (A) {0} ;
      \node[cir1] at (-11,-2) (B) {1}; 
      \node[cir1] at (-10,-2) (C) {2};
      \node[cir1] at (-9,-2) (D) {3};
      \node[cir1] at (-8,-2) (E) {4};
      \node[cir1] at (-7,-2) (F) {5};
      \node[cir1] at (-6,-2) (G) {6};
      \node[cir1,accepting] at (-9,0) (F1) {$f_1$};
      
       \path (A) edge node [above]{$a$}node {}(B);
       \path (A) edge[bend left=40] node [above,pos=0.9]{$a$}node {}(F1);
       
       \path (B) edge node [above]{$a$}node {}(C);
       \path (B) edge[bend left=20] node [above,pos=0.8]{$a$}node {}(F1);
       
        \path (C) edge node [above]{$a$}node{} (D);
        \path (C) edge[bend left=10] node [above,pos=0.7]{$a$}node {}(F1);
       
       \path (D) edge node [above]{$a$}node {}(E);
       \path (D) edge node [left,pos=0.6]{$a$}node {}(F1);
       
       \path (E) edge node [above]{$a$}node{} (F);
       \path (E) edge[bend right=10] node [right,pos=0.9]{$a$}node {}(F1);
       
       \path (F) edge node [above]{$a$}node {}(G);
\path (F) edge[bend right=20] node [right,pos=0.8]{$a$}node {}(F1);
              \path (G) edge[bend left=40] node [above]{$a$}node{} (D);
              \path (G) edge[bend right=30] node [above,pos=0.7]{$a$}node {}(F1);

      \node[initial, cir2, initial text ={}] at (-4,-2) (A1) {$\iota_2$} ;
        \node[cir2,accepting] at (-4,0) (F2) {$f_2$} ;
        \node[cir2] at (-1,-2) (B1) {$b_1$} ;
        \node[cir2] at (-1,0) (C1) {$b_2$} ;

       \path (A1) edge node [below]{$b_1$}node {}(B1);
       \path (A1) edge node [above,pos=0.8]{$b_2$}node {}(C1);
       \path (A1) edge node [left]{$b_2,b_1$}node {}(F2);
       \path (B1) edge node [below,pos=0.8]{$b_2,b_1$}node {}(F2);

       \path (B1) edge[loop below] node [right]{$b_2$}node {}(B1);
       \path (B1) edge[bend left=20] node [left]{$b_1$}node {}(C1);

       \path (C1) edge[loop above] node [right]{$b_2$}node {}(C1);
       \path (C1) edge[bend left=20] node [right]{$b_1$} node{} (B1);
       \path (C1) edge node [above]{$b_1,b_2$} node{} (F2);

  \node[initial, cir2, initial text ={}] at (-14,-4) (A2) {$\iota_2$} ;
  \node[cir11,accepting] at (-12.5,-4) (B2) {$(\iota_2,s,0)$} ; 
  \path(A2) edge node [above]{$a$}node {}(B2);
  \node[cir11,accepting] at (-10.5,-4) (C2) {$(\iota_2,s^2,f_1)$} ;
  \path(B2) edge node [above]{$a$}node {}(C2);
  \node[cir2] at (-9,-4) (D2) {$b_2$} ;
  \path(C2) edge node [above]{$b$}node {}(D2);
  \node[cir11] at (-7.5,-4) (E2) {$(b_2,s^2,0)$} ;
  \path(D2) edge node [above]{$b$}node {}(E2);
  \node[cir11] at (-5.5,-4) (F2) {$(b_2,s^3,1)$} ;
  \path(E2) edge node [above]{$a$}node {}(F2);
  \node[cir11] at (-3.5,-4) (G2) {$(b_2,s^4,2)$} ;
  \path(F2) edge node [above]{$a$}node {}(G2);
  \node[cir11] at (-1.5,-4) (H2) {$(b_2,s^3,f_1)$} ;
  \path(G2) edge node [above]{$a$}node {}(H2);
  \node[cir2] at (0,-4) (I2) {$b_1$} ;
  \path(H2) edge node [above]{$b$}node {}(I2);
  \node[cir11] at (1.5,-4) (J2) {$(b_1,s,0)$} ;
  \path(I2) edge node [above]{$a$}node {}(J2);
  \node[cir2,accepting] at (3,-4) (K2) {$f_2$} ;
  \path(J2) edge node [above]{$b$}node {}(K2);

  \node[initial, cir2, initial text ={}] at (-14,-5) (A2) {$\iota_2$} ;
  \node[cir11,accepting] at (-12.5,-5) (B2) {$(\iota_2,s^2,0)$} ; 
\path(A2) edge node [above]{$b$}node {}(B2);
  \node[cir2,accepting] at (-11,-5) (B3) {$f_2$} ; 
\path(B2) edge node [above]{$b$}node {}(B3);
 \node[cir11,accepting] at (-9.5,-5) (B4) {$(f_2,s^2,0)$} ; 
\path(B3) edge node [above]{$b$}node {}(B4);
 \node[cir11,accepting] at (-7.5,-5) (B5) {$(f_2,s^3,f_1)$} ; 
\path(B4) edge node [above]{$a$}node {}(B5);

  \node[initial, cir2, initial text ={}] at (-14,-6) (A2) {$\iota_2$} ;
  \node[cir11,accepting] at (-12.5,-6) (B2) {$(\iota_2,s^2,0)$} ; 
\path(A2) edge node [above]{$b$}node {}(B2);
  \node[cir2] at (-11,-6) (B3) {$b_2$} ; 
\path(B2) edge node [above]{$b$}node {}(B3);
 \node[cir11] at (-9.5,-6) (B4) {$(b_2,s^2,0)$} ; 
\path(B3) edge node [above]{$b$}node {}(B4);
 \node[cir11] at (-7.5,-6) (B5) {$(b_2,s^3,f_1)$} ; 
\path(B4) edge node [above]{$a$}node {}(B5);
  \node[cir2,accepting] at (-6,-6) (B6) {$f_2$} ;
\path(B5) edge node [above]{$b$}node {}(B6);

  \node[initial, cir2, initial text ={}] at (-14,-7) (A2) {$\iota_2$} ;
  \node[cir11,accepting] at (-12.5,-7) (B2) {$(\iota_2,s^2,0)$} ; 
\path(A2) edge node [above]{$b$}node {}(B2);
  \node[cir2] at (-11,-7) (B3) {$b_2$} ; 
\path(B2) edge node [above]{$b$}node {}(B3);
 \node[cir11] at (-9.5,-7) (B4) {$(b_2,s^2,0)$} ; 
\path(B3) edge node [above]{$b$}node {}(B4);
 \node[cir11] at (-7.5,-7) (B5) {$(b_2,s^3,f_1)$} ; 
\path(B4) edge node [above]{$a$}node {}(B5);
  \node[cir2] at (-6,-7) (B6) {$b_1$} ;
\path(B5) edge node [above]{$b$}node {}(B6);
  \node[cir11] at (-4.5,-7) (B7) {$(b_1,s^2,0)$} ; 
  \node[cir2,accepting] at (-3,-7) (B8) {$f_2$} ; 
  \node[cir11,accepting] at (-1.5,-7) (B9) {$(f_2,s,0)$} ; 
\path(B6) edge node [above]{$b$}node {}(B7);
\path(B7) edge node [above]{$b$}node {}(B8);
\path(B8) edge node [above]{$a$}node {}(B9);

\end{tikzpicture}
  }
  \caption{Top left, is the $\varphi_1$-good automaton $\mathcal{A}_1$, and top
  right, is the $\psi$-good automaton $\mathcal{A}_2$. $b_1=s^3$, $b_2=s^4$.
    Below, a run of the automaton
  $\mathcal{A}$ on $w=(aab)(baaab)(ab)$ is depicted.  The green states summarize
  the run in $\mathcal{A}_1$ on $\Sigma_1^*$, and between consecutive pink
  states, a run in $\mathcal{A}_2$ on $B$ is summarized.
  Runs on $(bb)(ba)$, $(bb)(bab)$ and $(bb)(bab)(bb)a$ are also shown.   
  }
  \label{fig:induct-one}                                                                                                                                                        
\end{center}                                                                                                                                                         
\end{figure}
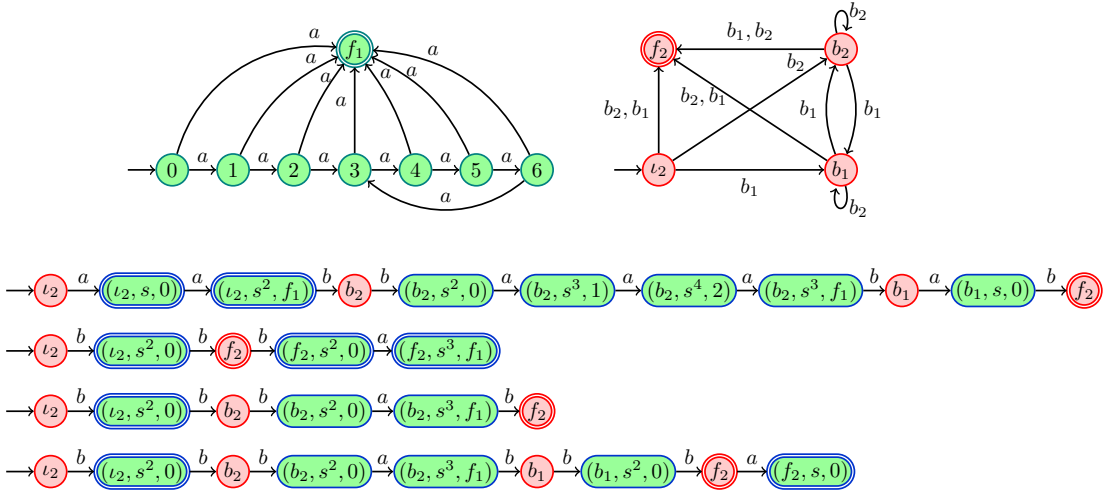 

We now show how to construct a \emph{weakly-good}
automaton $\A$ for $\phi\colon\Sigma^{+}\to S$.  Intuitively, we use $\A_1$ to
scan the words $u_i$ over $\Sigma_1$ and we use $\A_2$ to scan the sequence of
blocks $a_iu_ic_i$ represented by the letters $b_i$ in $B$. 
The set of states of $\A$ is $Q=Q_2\cup(Q_2\times S\times Q_1)$. The 
initial state is $\iota=\iota_2$. 
The transitions are defined below in 
such a way that:
\begin{enumerate}[nosep]
  \item  If 
  $\iota_2\xrightarrow{b_1}q_1\xrightarrow{b_2}q_2\xrightarrow{b_3}q_3\cdots$ 
  is a run of $\A_2$ then we will have in $\A$ the run
  $$
  \iota\xrightarrow{a_1u_1c_1}q_1\xrightarrow{a_2u_2c_2}q_2
  \xrightarrow{a_3u_3c_3}q_3\cdots
  $$

  \item Now, zooming in some factor $a_iu_ic_i$ with $u_i=d_1d_2\cdots d_m$, if
  $\iota_1\xrightarrow{d_1}p_1\xrightarrow{d_2}p_2\cdots\xrightarrow{d_m}p_m$ 
  is a run of $\A_1$ then, with $q=q_{i-1}$, we will have in $\A$ the run
  $$
  q\xrightarrow{a_i}(q,\phi(a_i),\iota_1)
  \xrightarrow{d_1}(q,\phi(a_id_1),p_1)
  \xrightarrow{d_2}(q,\phi(a_id_1d_2),p_2) \cdots
  \xrightarrow{d_m}(q,\phi(a_iu_i),p_m) 
  \xrightarrow{c_i}q_i
  $$
\end{enumerate}
Formally, the transitions of $\A$ are defined as follows:
\begin{itemize}[nosep]
  \item  $q \xrightarrow{a\in\Sigma} (q,\phi(a),\iota_1)$ for $q\in Q_2$,

  \item  $(q,s,p) \xrightarrow{a\in\Sigma_1} (q,s\phi(a),p')$ if 
  $p\xrightarrow{a}p'$ in $\A_1$,

  \item  $(q,s,p) \xrightarrow{a\in\Sigma_2} q'$ if $p\in 
  F_1\cup\{\iota_1\}$ and $q\xrightarrow{sc}q'$ in $\A_2$.
\end{itemize}
Notice that if $\A_1$ and $\A_2$ are deterministic and 
complete then so is the automaton $\A$.

\begin{figure}[t]
  \centering
  \includegraphics[scale=0.22]{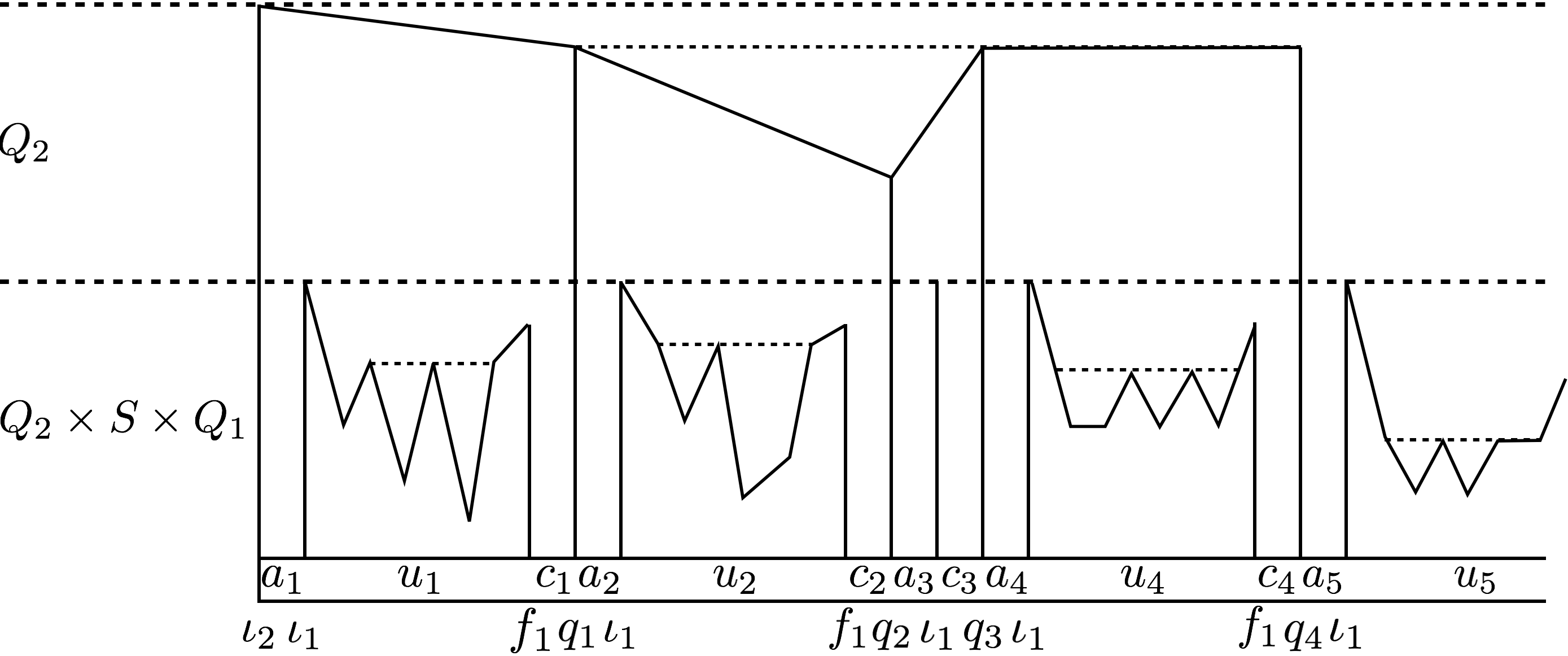}
  \caption{Run for the first inductive case: $Sc\subsetneq S$.}
  \label{fig:RunInductive1}
\end{figure}

The total order $<$ on $Q$ is defined so that $Q_2\times S\times Q_1 < Q_2$, and
$<$ coincides with $<_2$ on $Q_2$, and $p<_1 p'$ implies $(q,s,p)<(q,s',p')$.
Notice that the initial state $\iota=\iota_2$ is the maximal state in $Q$ and
has no incoming transitions, so \eqref{item:G3} holds.
Figure~\ref{fig:RunInductive1} describes the ordering.  While summarizing the
runs of $\Aa_1$ on $u_i \in \Sigma_1^+$, all states are ranked strictly lower
than the states of $\Aa_2$; hence, between two consecutive visits to $Q_2$, all
states seen are strictly lower.  Intuitively this suggests that $\phi(L_q(\Aa))$
for $q \in Q_2$ is same as $\psi(L_q(\Aa_2))$.  Likewise, while staying in
$Q_2\times S\times Q_1$, the ordering of states is that in $\Aa_1$.  Hence,
while considering $L_{(q,s,p)}(\Aa)$, we cannot see any $r \in Q_2$ in the loop;
hence, $\varphi(L_{(q,s,p)}(\Aa))$ must be same as $\varphi(L_p(\Aa_1))$.  This
ensures \eqref{item:G2}.  
The final and repeated states of $\A$ are given by
$F = F_2 \cup \big( (F_2\cup\{\iota_2\})\times S\times(F_1\cup\{\iota_1\}) \big)$, 
$R = R_2 \cup \big( (F_2\cup\{\iota_2\})\times S\times R_1 \big)$.

\begin{lemma}\label{lem:case1-good}
  The automaton $\A$ defined above is \emph{weakly-good} for $\phi\colon\Sigma^{+}\to S$.
\end{lemma}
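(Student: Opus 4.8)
The plan is to verify the three properties \eqref{item:G1}, \eqref{item:G2}, \eqref{item:G3} that define a weakly-good automaton for $\A$. Property \eqref{item:G3} was already noted: $\iota=\iota_2$ is maximal and has no incoming transition. Both other arguments rest on one structural fact about runs of $\A$: from a state of $Q_2$ one arbitrary letter is read, entering $Q_2\times S\times Q_1$; inside $Q_2\times S\times Q_1$ the first component is frozen and the $Q_1$-component follows $\A_1$ on letters of $\Sigma_1$; and a letter of $\Sigma_2$ is the only way back to $Q_2$, that step realising a transition of $\A_2$ labelled $b_i=\phi(a_iu_ic_i)\in B$. Thus a run of $\A$ on $w$ visits $Q_2$ exactly at the block boundaries of the \emph{unique} factorization $w=(a_1u_1c_1)(a_2u_2c_2)\cdots$, and it corresponds to the data of a run $\iota_2\xrightarrow{b_1}q_1\xrightarrow{b_2}q_2\cdots$ of $\A_2$ on $b_1b_2\cdots$ together with a run of $\A_1$ on each factor $u_i$, subject to the constraint that every $\Sigma_2$-transition is enabled, i.e.\ that each $\A_1$-run preceding one ends in $F_1\cup\{\iota_1\}$. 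Since $\iota_1$ has no incoming transition, such a \emph{nonempty} $\A_1$-run ends in $F_1$, is accepting, and by \eqref{item:G1} for $\A_1$ is unique; on $u_i=\varepsilon$ it is the empty run.

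For \eqref{item:G2} I treat the two kinds of states. If $q\in Q_2$, then since $Q_2\times S\times Q_1<Q_2$ and $<$ restricts to $<_2$ on $Q_2$, a run $q\xrightarrow{w}_{\da q}q$ visits $Q_2$ only at states $<_2 q$ and stays in $Q_2\times S\times Q_1\subseteq\da q$ between block boundaries; its $Q_2$-projection is a run of $\A_2$ witnessing $b_1\cdots b_k\in L_q(\A_2)$, and $\phi(w)=\phi(a_1u_1c_1\cdots a_ku_kc_k)=\psi(b_1\cdots b_k)$, which by \eqref{item:G2} for $\A_2$ equals the idempotent $e_q\in Sc\subseteq S$. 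If $q=(r,s,p)$, then $\da q\subseteq Q_2\times S\times Q_1$ and, the first component being frozen there, a run $q\xrightarrow{w}_{\da q}q$ keeps first component $r$ and its $Q_1$-component traces a run $p\xrightarrow{w}p$ of $\A_1$ all of whose intermediate states are $\leq_1 p$; cutting $w$ at the revisits of $p$ gives $w=v_1\cdots v_\ell$ with each $v_j\in L_p(\A_1)$, so $\phi(v_j)=e_p$ (the idempotent from \eqref{item:G2} for $\A_1$) and $\phi(w)=e_p^{\,\ell}=e_p$. When $p\in\{\iota_1,f_1\}$ one has $L_p(\A_1)=\emptyset$ ($\iota_1$ has no incoming, $f_1$ no outgoing transition), hence $L_q(\A)=\emptyset$ and there is nothing to prove.

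For \eqref{item:G1} I combine the correspondence above with a case split on $w$. If $w\in\Sigma^+$ ends with a full block $a_nu_nc_n$, a run of $\A$ ends in $Q_2$ and is accepting iff its endpoint lies in $F\cap Q_2=F_2$, i.e.\ iff the $\A_2$-run on $b_1\cdots b_n$ is accepting; if $w$ ends with $a_nu_n$, the run ends at $(q_{n-1},\phi(a_nu_n),p_m)$, which lies in $F$ iff $q_{n-1}\in F_2\cup\{\iota_2\}$ and $p_m\in F_1\cup\{\iota_1\}$, which (since $\iota_2$ and $\iota_1$ have no incoming transition) forces any nonempty component run to be accepting. The same reasoning covers $w\in\Sigma^\omega$ with finitely many letters of $\Sigma_2$: its last block is $a_nu_n$ with $u_n\in\Sigma_1^\omega$, the run of $\A$ stabilises in $\{q_{n-1}\}\times S\times Q_1$ and visits $R$ infinitely often iff $q_{n-1}\in F_2\cup\{\iota_2\}$ and the $\A_1$-run on $u_n$ is B\"uchi-accepting. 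In each of these cases, \eqref{item:G1} for $\A_1$ and $\A_2$ gives uniqueness of the component runs, and superposing their accepting versions gives existence, so $w$ has exactly one accepting run.

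The remaining, and most delicate, case is $w\in\Sigma^\omega$ with infinitely many letters of $\Sigma_2$: then all $u_i$ are finite, every block is finite, and a run of $\A$ on $w$ visits $Q_2$ infinitely often, at states $q_i$ that form an infinite run $q_0q_1\cdots$ of $\A_2$ on $b_1b_2\cdots$. The danger is that $R=R_2\cup\big((F_2\cup\{\iota_2\})\times S\times R_1\big)$ might be met infinitely often through its second component while $q_0q_1\cdots$ is \emph{not} accepting, creating a spurious accepting run of $\A$ and breaking unambiguity. Here I use crucially that $\A_2$ is \emph{good}, not merely weakly-good: by \eqref{item:G4}, $f_2$ is the unique final state and has no outgoing transition, so the infinite run $q_0q_1\cdots$ never visits $F_2$; by \eqref{item:G3}, $\iota_2$ has no incoming transition, so $q_i=\iota_2$ only for $i=0$. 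Hence states of $(F_2\cup\{\iota_2\})\times S\times R_1$ can occur only inside the (finite) first block, so only finitely often, and a run of $\A$ on $w$ visits $R$ infinitely often iff it visits $R_2$ at infinitely many block boundaries iff $q_0q_1\cdots$ is an accepting run of $\A_2$. That run is unique by \eqref{item:G1} for $\A_2$, the $\A_1$-runs on the $u_i$ are unique as in the first paragraph, and superposing accepting runs gives existence; thus $w$ has exactly one accepting run in $\A$, completing \eqref{item:G1}. I expect this last case — matching the B\"uchi condition of $\A$ with that of $\A_2$ — to be the main obstacle, and it is precisely what forces the use of the extra structural properties \eqref{item:G3}, \eqref{item:G4} of the \emph{good} automaton $\A_2$ rather than a merely weakly-good one.
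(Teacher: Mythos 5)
Your proposal is correct and follows essentially the same route as the paper's proof: the run-correspondence between $\A$ and the pair $(\A_2,\A_1)$, the observation that a nonempty $\A_1$-run forced to end in $F_1\cup\{\iota_1\}$ must in fact end in $F_1$ and hence be the unique accepting run, the two-case order argument for \eqref{item:G2}, and the use of \eqref{item:G3}/\eqref{item:G4} of the \emph{good} automaton $\A_2$ to show that, when $w$ has infinitely many $\Sigma_2$-letters, states of $(F_2\cup\{\iota_2\})\times S\times R_1$ occur only in the first block, so acceptance of $\A$ matches acceptance of $\A_2$. The only differences are organizational (you state the correspondence up front and treat \eqref{item:G2} before \eqref{item:G1}); no substantive gap.
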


\begin{proof}
  We have already seen that $\A$ satisfies \eqref{item:G3}.
  We show that $\Aa$ satisfies \eqref{item:G1}.
   
  Consider a word $w\in\Sigma^{+}\cup\Sigma^{\omega}$ and its
  unique factorization $w=(a_1 u_1 c_1)(a_2 u_2 c_2)(a_3 u_3 c_3)\cdots$ with
  $a_i\in\Sigma$, $u_i\in\Sigma_1^{*}\cup\Sigma_1^{\omega}$ and
  $c_i\in\Sigma_2$.  Let $b_i=\phi(a_iu_ic_i)\in B$.  There is a unique empty or
  accepting run $\tau=\iota_2\xrightarrow{b_1}q_1\xrightarrow{b_2}q_2
  \xrightarrow{b_3}q_3\cdots$ of $\A_2$. For each $i\geq 1$, assuming that 
  $u_i=d_1d_2\cdots$, there is a unique empty or accepting run 
  $\sigma_i=\iota_1\xrightarrow{d_1}p_1\xrightarrow{d_2}p_2\cdots$ of $\A_1$. We 
  construct the corresponding  
  subrun 
  $
  \rho_i=q_{i-1}\xrightarrow{a_i}(q_{i-1},\phi(a_i),\iota_1)
  \xrightarrow{d_1}(q_{i-1},\phi(a_id_1),p_1)
  \xrightarrow{d_2}(q_{i-1},\phi(a_id_1d_2),p_2) \cdots
  $ 
  of $\A$.  If $u_i$ is finite with length $m\geq0$ then the last state of
  $\rho_i$ is $(q_{i-1},\phi(a_iu_i),p_m)$ with $p_m\in F_1\cup\{\iota_1\}$ 
  (we let $p_0=\iota_1$).
  In this case, if $c_i$ exists there is a transition
  $(q_{i-1},\phi(a_iu_i),p_m)\xrightarrow{c_i}q_i$ in $\A$ since
  $b_i=\phi(a_iu_i)c$ and $q_{i-1}\xrightarrow{b_i}q_i$ is a transition of
  $\A_2$.  Therefore, $\rho_{i-1}\xrightarrow{c_i}q_i$ is a subrun of $\A$ 
  reading $a_iu_ic_i$.
  
  When $w$ contains infinitely many letters from $\Sigma_2$, the factorization
  is infinite and we obtain the run
  $\rho=\rho_1\xrightarrow{c_1}\rho_2\xrightarrow{c_2}\rho_3\cdots$ of $\A$ 
  reading $w$. Since $\tau$ is accepting in $\A_2$ we have $q_i\in R_2$ 
  for infinitely many $i$'s. Therefore, $\rho$ is accepting in $\A$.
  
  Assume now that $w$ contains finitely many letters from $\Sigma_2$.  Then the
  factorization is finite, say of length $n>0$.  If the last factor $a_nu_nc_n$
  is complete then $\rho=\rho_1\xrightarrow{c_1}\rho_2\xrightarrow{c_2}\cdots
  \rho_{n}\xrightarrow{c_n}q_n$ is a run of $\A$ reading $w$ which is accepting
  since $\tau$ is accepting.
  
  When the last factor is of the form $a_nu_n$ then
  $\rho=\rho_1\xrightarrow{c_1}\rho_2\xrightarrow{c_2}\cdots
  \rho_{n-1}\xrightarrow{c_{n-1}}\rho_n$ is a run of $\A$ reading $w$.  Since
  $\tau$ is empty or accepting, we have $q_{n-1}\in F_2\cup\{\iota_2\}$.  Since
  $\sigma_n$ is an empty or an accepting finite or infinite run of $\A_1$, we
  deduce that $\A$ is accepting.  We have proved that the automaton $\A$ accepts
  all words in $\Sigma^{+}\cup\Sigma^{\omega}$.
  
  \medskip
  We show now that $\A$ is unambiguous.  Let $\rho'$ be an accepting run of $\A$
  on $w$.  We have to show that $\rho'=\rho$ where $\rho$ is the accepting run
  for $w$ defined above.  By definition of $\A$, the run $\rho'$ induces the
  very same factorization of $w=(a_1 u_1 c_1)(a_2 u_2 c_2)\cdots$.  Moreover, we
  can write
  $\rho'=\iota\xrightarrow{a_1u_1c_1}q'_1\xrightarrow{a_2u_2c_2}q'_2\cdots$ and
  $\tau'=\iota_2\xrightarrow{b_1}q'_1\xrightarrow{b_2}q'_2\cdots$ is a run
  of $\A_2$. We show that $\tau'=\tau$.
  
  If $w$ has infinitely many letters from $\Sigma_2$ then the run $\tau'$ is
  infinite and none of the states $q'_i$ belongs to $F_2\cup\{\iota_2\}$ since
  $\A_2$ is good.  Now $\rho'$ is accepting in $\A$ and by definition of $R$ we
  deduce that $q'_i\in R_2$ for infinitely many $i$'s.  Therefore $\tau'$ is
  accepting in $\A_2$.  Since $\A_2$ satisfies \eqref{item:G1}, we deduce that
  $\tau'=\tau$, i.e., $q'_i=q_i$ for all $i$.
  
  If $w$ has finitely many letters from $\Sigma_2$ and the last
  factor is of the form $a_nu_nc_n$ then $\rho'$ ends in state $q'_n\in F$. We 
  deduce that $q'_n\in F_2$ and $\tau'$ is accepting in $\A_2$.
  As above, we deduce that $\tau'=\tau$.
  If the last factor of the factorization is $a_nu_n$ then $\rho'$ ends in some 
  state $(q'_{n-1},s,p)\in F$ and $q'_{n-1}$ is the last state of $\tau'$. By 
  definition of $F$, we deduce that $\tau'$ is empty (if $n=1$) or accepting 
  (if $n>1$). Again, we obtain $\tau'=\tau$.
  
  It remains to show
  that, for each $i$, the subrun $\rho'_i$ of $\rho'$ reading $a_iu_i$
  equals $\rho_i$.  Assuming that $u_i=d_1d_2\cdots$, by definition of $\A$
  we deduce that
  $
  \rho'_i=q_{i-1}\xrightarrow{a_i}(q_{i-1},\phi(a_i),\iota_1)
  \xrightarrow{d_1}(q_{i-1},\phi(a_id_1),p'_1)
  \xrightarrow{d_2}(q_{i-1},\phi(a_id_1d_2),p'_2) \cdots
  $ 
  and $\sigma'_i=\iota_1\xrightarrow{d_1}p'_1\xrightarrow{d_2}p'_2\cdots$ is a
  run of $\A_1$.  If $u_i$ is infinite, since $\rho'$ is accepting in $\A$ we
  deduce that $\sigma'_i$ is accepting in $\A_1$.  Since $\A_1$ satisfies
  \eqref{item:G1}, we deduce that $\sigma'_i=\sigma_i$, hence also
  $\rho'_i=\rho_i$.  Assume now that $u_i$ is finite with length
  $m\geq0$.  Clearly, if $m=0$ then we have $\rho'_i=\rho_i$.  We assume
  now that $m>0$ and we show that the last state $p'_m$ of $\sigma'_i$ is final.
  If $c_i$ exists in the factorization then
  $(q_{i-1},\phi(a_iu_i),p'_m)\xrightarrow{c_i}q_i$ is a transition in $\A$ which
  implies $p'_m\in F_1\cup\{\iota_1\}$.  If the last factor is $a_iu_i$ then,
  since $\rho'$ is accepting, we deduce that $p'_m\in F_1\cup\{\iota_1\}$.  Now,
  $m>0$ and \eqref{item:G3} implies that $p_m\neq\iota_1$.  Therefore,
  $\sigma'_i$ is accepting in $\A_1$ and we deduce as above that
  $\sigma'_i=\sigma_i$, hence also $\rho'_i=\rho_i$. Since this holds 
  for all $i$'s, we have shown that $\rho'=\rho$.

\noindent{\bf {$\Aa$ satisfies \eqref{item:G2}}} 
Let $r\in Q$ be a state of $\A$ and $w\in L_r(\A)=L_{r,\da r,r}(\A)$.
  So we have in $\A$ a run $\rho=r\xrightarrow{w}r$ using intermediary states 
  strictly less than $r$. 
  
  Assume first that $r=q\in Q_2$.  Then, the run $\rho$ of $\A$ induces the
  following factorization $w=(a_1 u_1 c_1)(a_2 u_2 c_2)\cdots(a_n u_n c_n)$ with
  $n>0$.  We have
  $\rho=q\xrightarrow{a_1u_1c_1}q_1\xrightarrow{a_2u_2c_2}q_2\cdots
  q_{n-1}\xrightarrow{a_nu_nc_n}q$ and the states $q_1,\ldots,q_{n-1}$ are all
  less than $q$.  Therefore, with $b_i=\phi(a_iu_ic_i)$, we deduce that
  $\tau=q\xrightarrow{b_1}q_1\xrightarrow{b_2}q_2\cdots\xrightarrow{b_n}q$
  is a run of $\A_2$.  Since the order $<$ restricted to $Q_2$ equals $<_2$ we
  deduce that $b_1b_2\cdots b_n\in L_q(\A_2)=L_{q,\da_2 q,q}(\A_2)$.  Since
  $\A_2$ satisfies \eqref{item:G2}, we obtain $\psi(b_1\cdots b_n)=e_q$ where
  $e_q$ is the idempotent associated with state $q$ for $\A_2$.  Now,
  $\phi(w)=b_1\cdots b_n=\psi(b_1\cdots b_n)=e_q$ and we get 
  $L_q(\A)\subseteq\phi^{-1}(e_q)$.
  
  The second case is when $r=(q,s,p)\in Q_2\times S\times Q_1$.
  Since $Q_2\times S\times Q_1 < Q_2$ in $\A$, we deduce that
  $
  \rho=(q,s,p)\xrightarrow{d_1}(q,s_1,p_1)\cdots
  (q,s_{m-1},p_{m-1})\xrightarrow{d_m}(q,s,p)
  $ 
  for some $m>0$ and the intermediary states $(q,s_i,p_i)$ are all less than
  $(q,s,p)$ in $\A$.  By definition of the order $<$ in $\A$ we deduce that
  $p_i\leq_1 p$ in $\A_1$ for all $1\leq i<m$.  Therefore,
  $p\xrightarrow{d_1}p_1\cdots p_{m-1}\xrightarrow{d_m}p$ is a run of $\A_1$ and
  $w\in(L_p(\A_1))^{+}$.  Let $e_p$ be the idempotent associated with state $p$
  of $\A_1$ by \eqref{item:G2}.  We have shown that $L_r(\A)\subseteq
  (L_p(\A_1))^{+}\subseteq\phi^{-1}(e_p)$ since $e_p$ is an idempotent.  
\end{proof}

The second inductive case is when there is some semigroup element
$c\in\phi(\Sigma)$ such that $cS \subsetneq S$.  The proof is along the same
lines as the previous one but the construction turns out to be more complicated.
Again $(cS,\cdot)$ is a strict subsemigroup of $(S,\cdot)$, i.e., $|cS|<|S|$.
Let $\Sigma_2=\Sigma\cap\phi^{-1}(c)$ be the set of all letters mapped to $c$
and $\Sigma_1=\Sigma\setminus\Sigma_2$.  If $\Sigma_1=\emptyset$ then we are in
the second basic case above.  Hence we assume $\Sigma_1\neq\emptyset$ and since
$c\in\phi(\Sigma)\setminus\phi(\Sigma_1)$ we have
$|\phi(\Sigma_1)|<|\phi(\Sigma)|$ so by induction hypothesis we can construct a
\emph{good} automaton
$\A_1=(Q_1,\Sigma_1,\Delta_1,\iota_1,F_1,R_1,<_1)$ for the morphism $\phi$
restricted to $\Sigma_1$.

Each nonempty word $w$  
has a unique factorization 
$
w=a_0 u_0 (c_1 a_1 u_1)(c_2 a_2 u_2)(c_3 a_3 u_3)\cdots
$
with $a_i\in\Sigma$, $u_i\in\Sigma_1^{*}\cup\Sigma_1^{\omega}$ and
$c_i\in\Sigma_2$.  If $w\in\Sigma^{\omega}$ is infinite, the factorization has
infinitely many blocks when $w$ has infinitely many letters from $\Sigma_2$,
otherwise the factorization ends with some $a_nu_n\in\Sigma\Sigma_1^{\omega}$
with $n\geq 0$.  If the word $w\in\Sigma^{+}$ is finite then the factorization
ends with $a_nu_n\in\Sigma\Sigma_1^{*}$ with $n\geq 0$ or it ends with $c_n$
with $n\geq 1$.

We view $B=\phi(\Sigma_2\Sigma\Sigma_1^{*})\subseteq cS$ as an alphabet and we
consider the evaluation semigroup morphism $\psi\colon B^{+}\to cS$.  Let
$b_i=\phi(c_i a_i u_i)\in B$.  The factorization of $w$ yields the word
$b_1b_2b_3\cdots$ over $B$.  Moreover, for $i\leq j$ we have $\psi(b_i\cdots
b_j)=\phi(c_i a_i u_i\cdots c_j a_j u_j)$.

Since $|cS|<|S|$, we can construct a \emph{good}
automaton $\A_2=(Q_2,B,\Delta_2,\iota_2,F_2,R_2,<_2)$ for the morphism
$\psi\colon B^{+}\to cS$ by induction hypothesis.  We now show how to construct
a weakly-good automaton $\A$ for $\phi\colon\Sigma^{+}\to S$.  Intuitively, we
use $\A_1$ to scan the words $u_i$ over $\Sigma_1$ and we use $\A_2$ to scan the
sequence of blocks $c_i a_i u_i$ represented by the letters $b_i$ in $B$ (see
Figure~\ref{fig:RunInductive2}).

\begin{figure}[t]
  \centering
  \includegraphics[scale=0.23]{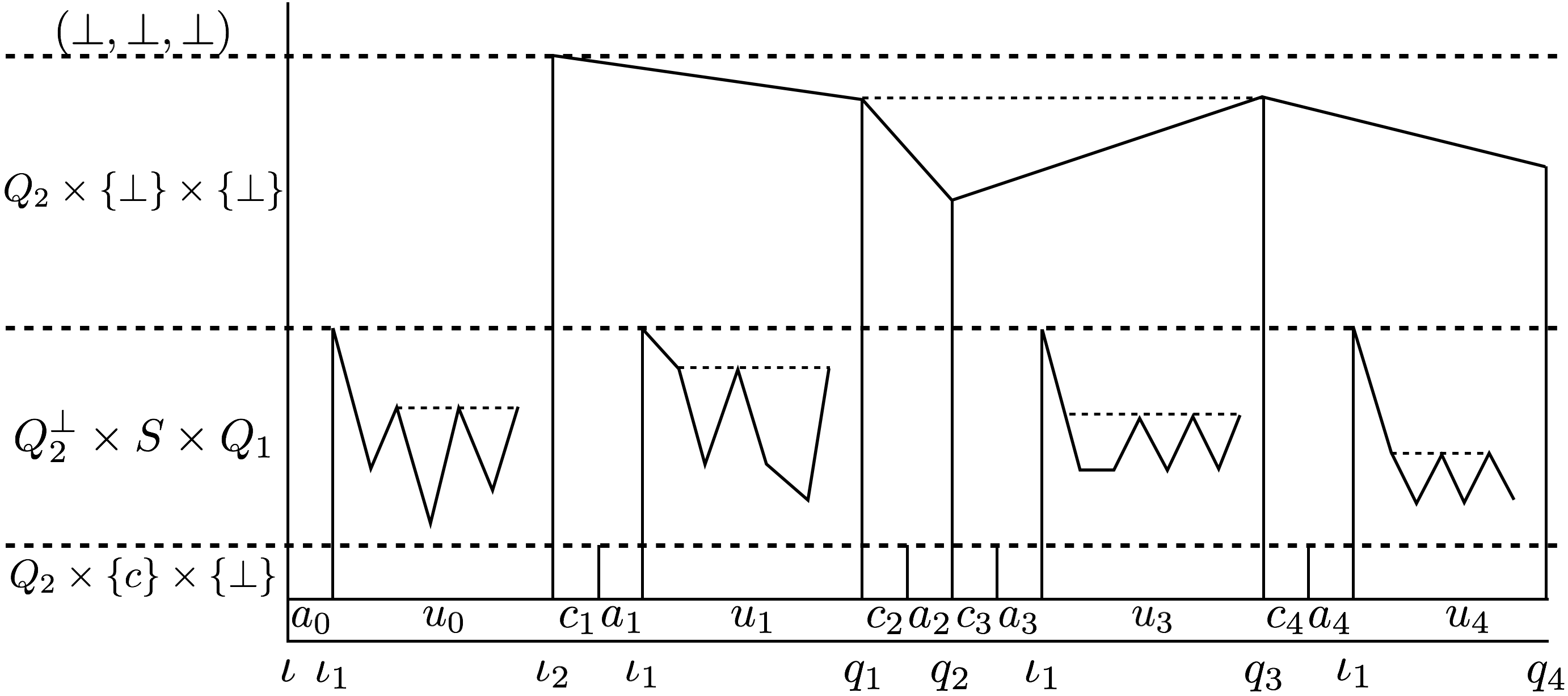}
  \caption{Run for the second inductive case: $cS\subsetneq S$.}
  \label{fig:RunInductive2}
\end{figure}

For a set $E$ and a new symbol $\bot\notin E$ we let $E^{\bot}=E\cup\{\bot\}$.
The set of states of $\A$ is $Q=Q^{\bot}_2\times S^{\bot}\times Q_1^{\bot}$. The 
initial state is $\iota=(\bot,\bot,\bot)$. The transitions are defined below 
so that:
\begin{enumerate}[nosep]
  \item  If 
  $\iota_2\xrightarrow{b_1}q_1\xrightarrow{b_2}q_2\xrightarrow{b_3}q_3\cdots$ 
  is a run of $\A_2$ then we will have in $\A$ the run
  $$
  \iota\xrightarrow{a_0u_0}(\iota_2,\bot,\bot)
  \xrightarrow{c_1a_1u_1}(q_1,\bot,\bot)
  \xrightarrow{c_2a_2u_2}(q_2,\bot,\bot)
  \xrightarrow{c_3a_3u_3}(q_3,\bot,\bot) \cdots
  $$

  \item Now, zooming in the initial factor $a_0u_0$ with $u_0=d_1d_2d_3\cdots$, if
  $\iota_1\xrightarrow{d_1}p_1\xrightarrow{d_2}p_2\xrightarrow{d_3}p_3\cdots$ 
  is a run of $\A_1$ then, we will have in $\A$ the run
  $$
  \iota
  \xrightarrow{a_0}(\bot,\phi(a_0),\iota_1)
  \xrightarrow{d_1}(\bot,\phi(a_0d_1),p_1)
  \xrightarrow{d_2}(\bot,\phi(a_0d_1d_2),p_2) 
  \xrightarrow{d_3} (\bot,\phi(a_0d_1d_2d_3),p_3) 
  \cdots
  $$

  \item Finally, zooming in some factor $c_ia_iu_i$ with $u_i=d_1d_2d_3\cdots$, if
  $\iota_1\xrightarrow{d_1}p_1\xrightarrow{d_2}p_2\xrightarrow{d_3}p_3\cdots$ 
  is a run of $\A_1$ then, with $q=q_{i-1}$, we will have in $\A$ the run
  $$
  (q,\bot,\bot)\xrightarrow{c_i}(q,c,\bot)
  \xrightarrow{a_i}(q,c\phi(a_i),\iota_1)
  \xrightarrow{d_1}(q,c\phi(a_id_1),p_1)
  \xrightarrow{d_2}(q,c\phi(a_id_1d_2),p_2) 
  \xrightarrow{d_3} 
  \cdots
  $$
\end{enumerate}
Formally, the transitions of $\A$ are defined as follows:
\begin{enumerate}[nosep]
  \item (a) $\iota \xrightarrow{a\in\Sigma} (\bot,\phi(a),\iota_1)$,
  (b) $\iota \xrightarrow{a\in\Sigma} (\iota_2,\bot,\bot)$,

  \item $(q,\bot,\bot) \xrightarrow{a\in\Sigma_2} (q,c,\bot)$ for $q\in Q_2$, 

  \item (a) $(q,c,\bot) \xrightarrow{a\in\Sigma} (q,c\phi(a),\iota_1)$ for $q\in 
  Q_2$, (b) $(q,c,\bot) \xrightarrow{a\in\Sigma} (q',\bot,\bot)$ if
  $q\xrightarrow{c\phi(a)}q'$ in $\A_2$,

  \item  $(q,s,p) \xrightarrow{a\in\Sigma_1} (q,s\phi(a),p')$ if $q\in 
  Q_2^{\bot}$ and $p\xrightarrow{a}p'\notin F_1$ in $\A_1$,

  \item  $(q,s,p) \xrightarrow{a\in\Sigma_1} (q',\bot,\bot)$ if 
  $p\xrightarrow{a}p'\in F_1$ in $\A_1$ and 
  $q\xrightarrow{s\phi(a)}q'$ in $\A_2$ or $(q=\bot \wedge q'=\iota_2)$.
\end{enumerate}
Notice that there are non-deterministic choices between transitions of type 1(a)/1(b), 
or 3(a)/3(b) or 4/5.  Hence, even if the automata $\A_1$ and $\A_2$ are deterministic,
the automaton $\A$ constructed in this second inductive case is
non-deterministic.  We will see below that it is unambiguous.  Intuitively, the
first choice (1(a),3(a),4) has to be taken when the next letter is in $\Sigma_1$ while
the second choice (1(b),3(b),5) has to be taken when the next letter is in $\Sigma_2$.

The total order $<$ on $Q$ is defined so that 
$
Q_2\times\{c\}\times\{\bot\} <
Q^{\bot}_2\times S\times Q_1 < 
Q_2\times\{\bot\}\times\{\bot\} < \iota
$ 
and $(q,\bot,\bot)<(q',\bot,\bot)$ iff $q<_2 q'$, and $p<_1 p'$ implies
$(q,s,p)<(q,s',p')$ for all $s,s'\in S$ and $q\in Q^{\bot}_2$.  Notice that the
initial state $\iota=(\bot,\bot,\bot)$ is the maximal state in $Q$ and has no
incoming transitions, so \eqref{item:G3} holds.

The final and repeated states of $\A$ are given by 
$F =  (F_2\cup\{\iota_2\})\times\{\bot,c\}\times\{\bot\}$, and
$R = \big( R_2\times\{\bot\}\times\{\bot\} \big)
\cup \big( (F^{\bot}_2\cup\{\iota_2\})\times S\times R_1 \big)$. 

\begin{figure}[tbh]
\begin{center} 
  \scalebox{0.8}{                                                                                                                                                       
    \begin{tikzpicture}[->,thick]     
    	
  \node[initial, cir1, initial text ={}] at (-7.5,-2) (A1) {$\iota_1$} ;
  \node[cir1] at (-6.5,-2) (B1) {$p$};
  \node[cir1,accepting] at (-5.5,-2) (Z1) {$f_1$};

  \path (A1) edge node [above]{$b$}node {}(B1);
  \path (A1) edge[bend right=30] node [below]{$b$}node {}(Z1);
  \path (B1) edge[loop above] node {$b$}node {}(B1);
  \path (B1) edge node [above]{$b$}node {}(Z1);
                                                                                                                                     
  \node[initial, cir, initial text ={}] at (-7.5,-4) (A) {$\iota_2$} ;
  \node[cir] at (-6.5,-4) (B) {$q$};
  \node[cir,accepting] at (-5.5,-4) (Z) {$f_2$};
       
  \path (A) edge node [above]{$\alpha$}node {}(B);
  \path (A) edge[bend right=30] node [below]{$\alpha$}node {}(Z);
  \path (B) edge[loop above] node {$\alpha$}node {}(B);
  \path (B) edge node [above]{$\alpha$}node {}(Z);
       
  \node[initial, cir22, initial text ={}] at (-20,-4) (A2) {$(\bot, \bot, \bot)$} ;
  \node[cir22] at (-20,-2) (C) {$(\bot,\alpha,\iota_1)$};
  \node[cir22,accepting] at (-18,-4) (D) {$(\iota_2,\bot,\bot)$};
  \node[cir22] at (-20,-6) (E) {$(\bot,\beta,\iota_1)$};
      
  \node[cir22] at (-18,-2) (C1) {$(\bot,\alpha,p)$};
  \node[cir22] at (-18,-6) (E1) {$(\bot,\beta,p)$};

  \path(A2)edge node [above]{$a,b$}node {}(D);
  \path(A2)edge node [left]{$a$}node {}(C);
  \path(A2)edge node [left]{$b$}node {}(E);
       
  \node[cir22,accepting] at (-16,-4) (F) {$(\iota_2,\alpha,\bot)$};
  \node[cir22] at (-16,-6) (F1) {$(\iota_2,\alpha,\iota_1)$};
  \node[cir22] at (-15,-7.5) (F2) {$(\iota_2,\alpha,p)$};
     
  \node[cir22] at (-14,-4) (G) {$(q,\bot,\bot)$};
  \node[cir22,accepting] at (-13,-6) (G1) {$(f_2,\bot,\bot)$};
  \node[cir22,accepting] at (-11.5,-7.5) (G2) {$(f_2,\alpha,\bot)$};
  \node[cir22] at (-9,-7.5) (G3) {$(f_2,\alpha,\iota_1)$};
  \node[cir22] at (-6.5,-7.5) (G4) {$(f_2,\alpha,p)$};
      
  \node[cir22] at (-12,-4) (H) {$(q,\alpha,\bot)$};
  \node[cir22] at (-10,-4) (I) {$(q,\alpha,\iota_1)$};
  \node[cir22] at (-10,-6) (I1) {$(q,\alpha,p)$};
      
  \path(C)edge node [right]{$b$}node {}(D);
  \path(E)edge node [right]{$b$}node {}(D);
  \path(C)edge node [above]{$b$}node {}(C1);
  \path(E)edge node [below]{$b$}node {}(E1);
      
  \path(C1)edge[loop above] node [right]{$b$}node {}(C1);
  \path(E1)edge[loop below] node [right]{$b$}node {}(E1);
  \path(F2)edge[loop left] node [left]{$b$}node {}(F2);
      
  \path(C1)edge node [right]{$b$}node {}(D);
  \path(E1)edge node [right]{$b$}node {}(D);
  \path(F1)edge node [below]{$b$}node {}(G);
  \path(F1)edge node [left]{$b$}node {}(F2);
  \path(F2)edge[bend left=10] node [left]{$b$}node {}(G);
  \path(F2)edge[bend right=0] node [below]{$b$}node {}(G1);
  \path(F1)edge node [below]{$b$}node {}(G1);
  
  \path(D)edge node [above]{$a$}node {}(F);
  \path(F)edge node [above]{$a,b$}node {}(G);
  \path(F)edge node [right]{$a,b$}node {}(F1);
  \path(F)edge node [right]{$a,b$}node {}(G1);
  
  \path(G)edge node [below]{$a$}node {}(H);
  
  \path(H)edge node [left]{$a,b$}node {}(G1);
  \path(H)edge[bend right=10] node [above]{$a,b$}node {}(G);
  
  \path(H)edge node [above]{$a,b$}node {}(I);
  \path(I)edge[bend left=0] node [above]{$b$}node {}(G1);
  \path(I)edge[bend right=30] node [above]{$b$}node {}(F);
  \path(I)edge node [left]{$b$}node {}(I1);
  \path(I1)edge[loop right] node [right]{$b$}node {}(I1);
  \path(I1)edge[bend right=7] node [above]{$b$}node {}(F);
  \path(I1)edge node [below]{$b$}node {}(G1);
  \path(G1)edge[bend right=0] node [below]{$a$}node {}(G2);
  \path(G2)edge node [above]{$a,b$}node {}(G3);
  \path(G3)edge node [above]{$b$}node {}(G4);
  \path(G4)edge[loop above] node [right]{$b$}node {}(G4);
\end{tikzpicture}
}

\end{center}
 \caption{The good automata $\Aa_1$, $\Aa_2$ on the right. The weakly-good automaton $\Aa$ 
 on the left.}
 \label{fig:induct-two}	
\end{figure}
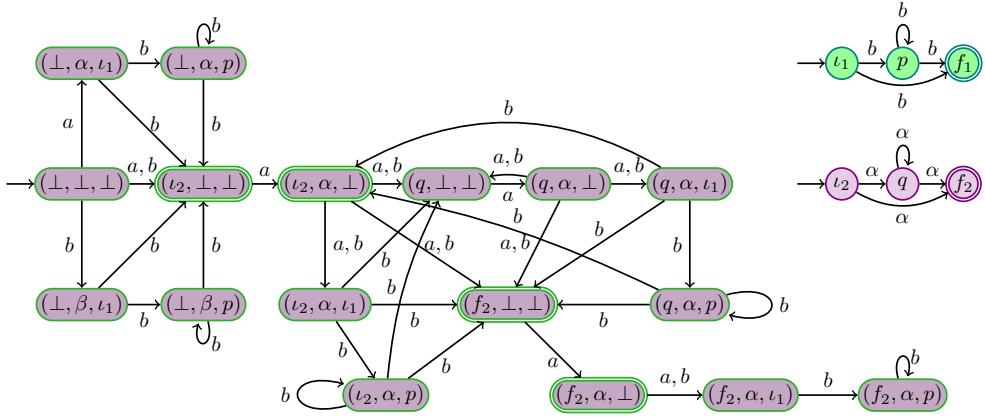

\begin{lemma}\label{lem:case2-good}
  The automaton $\A$ defined above is \emph{weakly-good} for $\phi\colon\Sigma^{+}\to S$.
\end{lemma}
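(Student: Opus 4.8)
The proof follows the pattern of the proof of Lemma~\ref{lem:case1-good}: property \eqref{item:G3} was already checked when the order on $Q$ was defined, so it remains to establish \eqref{item:G1} (universality together with unambiguity) and \eqref{item:G2}. For universality, I would take $w\in\Sigma^{+}\cup\Sigma^{\omega}$, its unique factorization $w=a_0u_0(c_1a_1u_1)(c_2a_2u_2)\cdots$, put $b_i=\phi(c_ia_iu_i)\in B$, and assemble the canonical run $\rho$ of $\A$ from the unique empty-or-accepting run $\tau=\iota_2\xrightarrow{b_1}q_1\xrightarrow{b_2}q_2\cdots$ of $\A_2$ on $b_1b_2\cdots$ together with the unique empty-or-accepting runs $\sigma_i$ of $\A_1$ on the $u_i$. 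The key point is that $\A_1$ is \emph{good}, so by \eqref{item:G4} its unique final state $f_1$ has no outgoing transition; hence the run of $\A_1$ on any word of $\Sigma_1^{*}\cup\Sigma_1^{\omega}$ meets $F_1$ at most once, and only as its last (necessarily finite) state. Inside a factor $c_ia_iu_i$ with $u_i$ finite nonempty one therefore uses rule~2, then rule~3(a), then rule~4 on all letters of $u_i$ but the last, then rule~5 on the last letter; when $u_i=\varepsilon$ one uses rule~3(b) in place of rules~3(a),~4,~5; when $u_i$ is infinite one uses rule~4 forever; and the initial factor $a_0u_0$ is treated likewise with rule~1 playing the combined role of rules~2 and~3 (rule~1(b) when $u_0=\varepsilon$; rule~1(a) followed by rules~4 and then rule~5 with its $q=\bot,\,q'=\iota_2$ clause otherwise; rule~4 forever when $u_0$ is infinite). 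It then remains to check, by cases on the shape of the factorization (infinitely many letters of $\Sigma_2$; finitely many, last factor $a_nu_n$ with $u_n$ finite or infinite; finitely many, last factor $c_n$), that $\rho$ ends in $F$ or visits $R$ infinitely often; this is a routine inspection of the definitions of $F$ and $R$, using in the infinite cases that an infinite run of $\A_2$ meets neither $\iota_2$ (no incoming transition) nor $f_2$ (no outgoing transition), and that on a final infinite $\Sigma_1$-tail the first component of every state visited lies in $F_2^{\bot}\cup\{\iota_2\}$.

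For unambiguity, let $\rho'$ be any accepting run of $\A$ on $w$. From the rigid shape of the transition rules — the reachable states split into $\iota$, the between-blocks states $Q_2\times\{\bot\}\times\{\bot\}$, the states $Q_2\times\{c\}\times\{\bot\}$, and the scanning states $Q_2^{\bot}\times S\times Q_1$, with transitions following a fixed pattern — one shows that $\rho'$ induces the very same factorization $w=a_0u_0(c_1a_1u_1)\cdots$ and visits a state of $Q_2\times\{\bot\}\times\{\bot\}$ (or $\iota$) exactly at the block boundaries; in particular, the last letter of a finite nonempty $u_i$ must be read by rule~5, since a run cannot leave a scanning state on a $\Sigma_2$-letter, so the corresponding $\A_1$-transition enters $F_1$. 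Thus $\rho'$ projects to a run $\tau'=\iota_2\xrightarrow{b_1}q'_1\cdots$ of $\A_2$ on $b_1b_2\cdots$ and to runs $\sigma'_i$ of $\A_1$ on the $u_i$. Since $\rho'$ is accepting, the definitions of $F$ and $R$ together with the goodness of $\A_2$ force $\tau'$ to be empty or accepting (in the infinite case no $q'_i$ with $i\ge 1$ equals $\iota_2$ or $f_2$, so the $R$-clause forces $q'_i\in R_2$ infinitely often), whence $\tau'=\tau$ by \eqref{item:G1} for $\A_2$; likewise each $\sigma'_i$ is empty or accepting in $\A_1$, the empty run being excluded when $u_i\ne\varepsilon$ by \eqref{item:G3} for $\A_1$, so $\sigma'_i=\sigma_i$ by \eqref{item:G1} for $\A_1$. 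Once $\tau$, the $\sigma_i$ and the factorization are fixed, the remaining choices along $\rho'$ (1(a)/1(b), 3(a)/3(b), 4/5 and the $s$-components) are all determined, so $\rho'=\rho$.

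For \eqref{item:G2}, I would go through the states $r$ of $\A$. If $r=\iota$, or $r=(\iota_2,\bot,\bot)$, or $r\in Q_2\times\{c\}\times\{\bot\}$, then $L_r(\A)=\emptyset$: for $\iota$ because it has no incoming transition; for $(\iota_2,\bot,\bot)$ because a return to it would project to a nonempty run $\iota_2\to\cdots\to\iota_2$ of $\A_2$, impossible since $\iota_2$ has no incoming transition; and for $(q,c,\bot)$ because every path out of it must pass through the strictly larger block $Q_2\times\{\bot\}\times\{\bot\}$ before it can re-enter $Q_2\times\{c\}\times\{\bot\}$. If $r=(q,\bot,\bot)$ with $q\in Q_2$, then a run $r\xrightarrow{w}r$ through states strictly below $r$ keeps its first component equal to $q$ or strictly $<_2 q$ and projects to a run $q\xrightarrow{b_1\cdots b_n}q$ of $\A_2$ through states $<_2 q$; hence $b_1\cdots b_n\in L_q(\A_2)$, and \eqref{item:G2} for $\A_2$ gives $\phi(w)=\psi(b_1\cdots b_n)=e_q$ with $e_q$ idempotent. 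Finally, if $r=(q,s,p)$ with $q\in Q_2^{\bot}$ and $p\in Q_1$, then from such a state rule~4 stays within $\{q\}\times S\times Q_1$ while rule~5 leaves to the strictly larger block $Q_2\times\{\bot\}\times\{\bot\}$; so a run $r\xrightarrow{w}r$ through lower states stays within $\{q\}\times S\times Q_1$, its third component traces a run $p\xrightarrow{d_1}p_1\cdots\xrightarrow{d_m}p$ of $\A_1$ with all $p_j\le_1 p$, and splitting $w$ at the returns to $p$ yields $w\in(L_p(\A_1))^{+}\subseteq\phi^{-1}(e_p)$, where $e_p$ is the idempotent given by \eqref{item:G2} for $\A_1$ (recall $e_p$ is idempotent).

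The step I expect to be the main obstacle is unambiguity. Unlike the automaton of Lemma~\ref{lem:case1-good}, this $\A$ is genuinely non-deterministic, with three families of branching (1(a)/1(b), 3(a)/3(b), 4/5), so the run is not read off the word directly; one has to use the goodness of $\A_1$ and $\A_2$ — that $f_1,f_2$ have no outgoing transition, that $\iota_1,\iota_2$ have no incoming transition, and the B\"uchi acceptance conditions of the sub-automata — to show that acceptance of $\rho'$ alone pins down every branching and forces $\rho'=\rho$. Keeping track, through the $\bot$-padded state space, of which block a state belongs to is where the argument is most delicate.
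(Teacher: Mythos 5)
Your proposal is correct and follows essentially the same route as the paper's proof: the same canonical run $\rho$ assembled from the unique accepting runs $\tau$ of $\A_2$ and $\sigma_i$ of $\A_1$, the same projection argument (first pinning down $\tau'=\tau$, then each $\sigma'_i=\sigma_i$) for unambiguity, and the same case analysis on $r\in\{\iota\}\cup(Q_2\times\{c\}\times\{\bot\})\cup(Q_2\times\{\bot\}\times\{\bot\})\cup(Q_2^{\bot}\times S\times Q_1)$ for \eqref{item:G2}. The details you flag as delicate (that acceptance forces every 1(a)/1(b), 3(a)/3(b), 4/5 choice, using that $f_1,f_2$ have no outgoing and $\iota_1,\iota_2$ no incoming transitions) are exactly the points the paper's proof spells out.
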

\begin{proof}
  We have already seen that $\A$ satisfies \eqref{item:G3}.  
  \subsection*{$\Aa$ satisfies \eqref{item:G1}}
    Consider a word $w\in\Sigma^{+}\cup\Sigma^{\omega}$ and its
  unique factorization $w=a_0 u_0 (c_1 a_1 u_1)(c_2 a_2 u_2)(c_3 a_3 u_3)\cdots$ with
  $a_i\in\Sigma$, $u_i\in\Sigma_1^{*}\cup\Sigma_1^{\omega}$ and
  $c_i\in\Sigma_2$.  Let $b_i=\phi(c_ia_iu_i)\in B$.  There is a unique empty or
  accepting run $\tau=\iota_2\xrightarrow{b_1}q_1\xrightarrow{b_2}q_2
  \xrightarrow{b_3}q_3\cdots$ of $\A_2$. For each $i\geq 0$, assuming that 
  $u_i=d_1d_2\cdots$, there is a unique empty or accepting run 
  $\sigma_i=\iota_1\xrightarrow{d_1}p_1\xrightarrow{d_2}p_2\cdots$ of $\A_1$. We 
  construct the corresponding subruns of $\A$. When $i=0$ we define
  \begin{align*}
    \rho_0 &= \iota
    \xrightarrow{a_0}(\iota_2,\bot,\bot)
    \tag{if $u_0=\varepsilon$}
    \\
    \rho_0 &= \iota
    \xrightarrow{a_0}(\bot,\phi(a_0),\iota_1)
    \xrightarrow{d_1}(\bot,\phi(a_0d_1),p_1)
    \xrightarrow{d_2}(\bot,\phi(a_0d_1d_2),p_2) \cdots
    \\
    & \hspace{20mm} (\bot,\phi(a_0d_1\cdots d_{m-1}),p_{m-1})
    \xrightarrow{d_m}(\iota_2,\bot,\bot) \cdots
    \tag{if $|u_0|=m>0$}
    \\
    \rho_0 &= \iota
    \xrightarrow{a_0}(\bot,\phi(a_0),\iota_1)
    \xrightarrow{d_1}(\bot,\phi(a_0d_1),p_1)
    \xrightarrow{d_2}(\bot,\phi(a_0d_1d_2),p_2) \cdots
    \tag{if $u_0\in\Sigma_1^{\omega}$}
  \end{align*}
  Notice that when $|u_0|=m>0$ then the last state of $\sigma_0$ is accepting, 
  hence the last transition of $\rho_0$ in this case is well-defined.
  When $i>0$, we define (with $q_0=\iota_2$):
  \begin{align*}
    \rho_i &= (q_{i-1},\bot,\bot)
    \xrightarrow{c_i}(q_{i-1},c,\bot)
    \xrightarrow{a_i}(q_i,\bot,\bot)
    \tag{if $u_i=\varepsilon$}
    \\
    \rho_i &= (q_{i-1},\bot,\bot)
    \xrightarrow{c_i}(q_{i-1},c,\bot)
    \xrightarrow{a_i}(q_{i-1},c\phi(a_i),\iota_1)
    \xrightarrow{d_1}(q_{i-1},c\phi(a_id_1),p_1)
    \\
    &\hspace{21mm} \cdots (q_{i-1},c\phi(a_id_1\cdots d_{m-1}),p_{m-1})
    \xrightarrow{d_m} (q_i,\bot,\bot)
    \tag{if $|u_i|=m>0$}
    \\
    \rho_i &= (q_{i-1},\bot,\bot)
    \xrightarrow{c_i}(q_{i-1},c,\bot)
    \xrightarrow{a_i}(q_{i-1},c\phi(a_i),\iota_1)
    \xrightarrow{d_1}(q_{i-1},c\phi(a_id_1),p_1)
    \\
    &\hspace{21mm}\xrightarrow{d_2}(q_{i-1},c\phi(a_id_1d_2),p_2) \cdots 
    \tag{if $u_i\in\Sigma_1^{\omega}$}
  \end{align*}
  Notice that when $|u_i|=m>0$ then the last state of $\sigma_i$ is accepting
  and we have $b_i=c\phi(a_iu_i)$, hence the last transition of $\rho_i$ in
  this case is well-defined.
  
  When $w$ contains infinitely many letters from $\Sigma_2$, the factorization
  is infinite and each $u_i$ is finite.  We obtain a run
  $\rho=\rho_0\rho_1\rho_2\cdots$ of $\A$ for $w$.  Since $\tau$ is
  accepting in $\A_2$, we deduce that $\rho$ uses infinitely many states from
  $R_2\times\{\bot\}\times\{\bot\}$.  Therefore, $\rho$ is accepting in $\A$.
    
  Assume now that $w$ contains finitely many letters from $\Sigma_2$.  Then the
  factorization is finite.  If the last factor is $c_na_nu_n$ with $n>0$ then
  $\rho=\rho_0\rho_1\rho_2\cdots\rho_n$ is a run of $\A$ for $w$. 
  \begin{itemize}[nosep]
    \item If $u_n$ is finite then the last state of $\rho$ is
    $r=(q_n,\bot,\bot)$.  Since $\tau$ is accepting, we deduce that $q_n\in F_2$
    and therefore $r\in F$ and $\rho$ is accepting.

    \item If $u_n\in\Sigma_1^{\omega}$ is infinite then the run $\tau$ of 
    $\A_2$ ends in state $q_{n-1}$ (recall that $q_0=\iota_2$). Since $\tau$ is 
    empty or accepting, we have $q_{n-1}\in F_2\cup\{\iota_2\}$. Now, the run 
    $\sigma_n$ of $\A_1$ reading $u_n$ is accepting, hence it uses infinitely 
    many states from $R_1$. We deduce that $\rho_n$ uses infinitely many states 
    in $\{q_{n-1}\}\times S\times R_1$ and $\rho$ is accepting. 
  \end{itemize}
  If the last factor is $c_n$ with $n>0$ then the run $\tau$ of $\A_2$ ends in
  state $q_{n-1}\in F_2\cup\{\iota_2\}$.  Therefore,
  $\rho=\rho_0\rho_1\rho_2\cdots\rho_{n-1}\xrightarrow{c_n}(q_{n-1},c,\bot)$ is
  a run of $\A$ for $w$ which is accepting by definition of $F$.
  
  The last case is when $w=a_0u_0$ with
  $u_0\in\Sigma_1^{*}\cup\Sigma_1^{\omega}$.  Then $\rho=\rho_0$ is a run of
  $\A$ for $w$.  If $u_0$ is finite, then $\rho=\rho_0$ ends in state
  $(i_2,\bot,\bot)\in F$ and $\rho$ is accepting.  If $u_0$ is infinite, then
  $\rho=\rho_0$ uses infinitely many states in $\{\bot\}\times S\times R_1$
  since $\sigma_0$ is accepting in $\A_1$.  Again, $\rho$ is accepting.

  We have proved that the automaton $\A$ accepts all words in
  $\Sigma^{+}\cup\Sigma^{\omega}$.
  
  \medskip
  We show now that $\A$ is unambiguous.  Let $\rho'$ be an accepting run of $\A$
  on $w$.  We have to show that $\rho'=\rho$ where $\rho$ is the accepting run
  for $w$ defined above.  By definition of $\A$, the run $\rho'$ induces the
  very same factorization of $w=a_0 u_0 (c_1 a_1 u_1)(c_2 a_2 u_2) \cdots$ with
  $a_i\in\Sigma$, $u_i\in\Sigma_1^{*}\cup\Sigma_1^{\omega}$ and
  $c_i\in\Sigma_2$.  Moreover, we can write
  $$
  \rho'=\iota\xrightarrow{a_0u_0}(\iota_2,\bot,\bot)
  \xrightarrow{c_1a_1u_1}(q'_1,\bot,\bot)
  \xrightarrow{c_2a_2u_2}(q'_2,\bot,\bot)\cdots
  $$
  We denote by $\rho'_0$ the subrun of $\rho'$ reading $a_0u_0$ and by $\rho'_i$
  the subrun of $\rho'$ reading $c_ia_iu_i$ for $i>0$.  From the definition of
  the transistions in $\A$, it is easy to check that
  $\tau'=\iota_2\xrightarrow{b_1}q'_1\xrightarrow{b_2}q'_2\cdots$ is a run of
  $\A_2$.  We first show that $\tau'=\tau$.
  
  If $w$ has infinitely many letters from $\Sigma_2$ then the run $\tau'$ is
  infinite and none of the states $q'_i$ belongs to $F_2\cup\{\iota_2\}$ since
  $\A_2$ is good.  Now $\rho'$ is accepting in $\A$ and by definition of $R$ we
  deduce that $q'_i\in R_2$ for infinitely many $i$'s.  Therefore $\tau'$ is
  accepting in $\A_2$.  Since $\A_2$ satisfies \eqref{item:G1}, we deduce that
  $\tau'=\tau$, i.e., $q'_i=q_i$ for all $i$.
  
  If $w$ has finitely many letters from $\Sigma_2$ and the last factor is $c_n$
  with $n>0$ then $\rho'$ ends in state $(q'_{n-1},c,\bot)\in F$.  We deduce
  that $q'_{n-1}\in F_2\cup\{\iota_2\}$ and $\tau'$ is empty (if $n=1$) or
  accepting (if $n>1$).  As above, we deduce that $\tau'=\tau$.  
  
  Assume now that $w$ has finitely many letters from $\Sigma_2$ and the 
  factorization ends with $a_nu_n$ ($n\geq0$). If $n=0$ then $\tau'$ is empty 
  and we get $\tau'=\tau$. So we assume $n>0$. If $u_n$ is infinite, then by 
  definition of $R$ we have $q'_{n-1}\in F_2^{\bot}\cup\{\iota_2\}$ (with 
  $q'_0=\iota_2$). We deduce that $\tau'$ is empty when 
  $n=1$ or accepting ending with $q'_{n-1}\in F_2$ when $n>1$. Again we deduce 
  that $\tau'=\tau$.
  The last case is when $u_n$ is finite. Since $\rho'$ is accepting, it ends in 
  some state $r\in F$. Due to the letter $a_n$, $r=(q,c,\bot)$ is not possible.
  Therefore, $r=(q'_n,\bot,\bot)\in F$ and $q'_n\in F_2\cup\{\iota_2\}$. We 
  deduce that $\tau'$ is accepting and again $\tau'=\tau$.
  
  It remains to show that $\rho'_i=\rho_i$ for all $i$.  Assume that
  $u_i=d_1d_2\cdots$.  We start with the case $i=0$.  There are three cases
  depending on whether $u_0$ is empty, finite of length $m>0$, or infinite.
  
  If $u_0=\varepsilon$, by definition of $\A$ we deduce that
  $\rho'_0=\iota\xrightarrow{a_0}(\iota_2,\bot,\bot)$. Indeed, either $w=a_0$ 
  and $\rho'_0=\rho'$ which is accepting, which implies that the last state of 
  $\rho'$ is $(\iota_2,\bot,\bot)$ by definition of $F$. Or the letter $c_1$ 
  exists and the second transition of $\rho'$ must be of type 3, which implies 
  again that the first transition of $\rho'$ is of type 2. In both cases,
  $\rho'_0=\iota\xrightarrow{a_0}(\iota_2,\bot,\bot)=\rho_0$.
  
  If $u_0$ is of length $m>0$ then by definition of $\A$ we deduce that
  $$
  \rho'_0=\iota\xrightarrow{a_0}(\bot,\phi(a_0),\iota_1)
  \xrightarrow{d_1}(\bot,\phi(a_0d_1),p'_1) \cdots
  (\bot,\phi(a_0d_1\cdots d_{m-1}),p'_{m-1}) \xrightarrow{d_m} (\iota_2,\bot,\bot)
  $$
  As above, we can check that the last state of $\rho'_0$ must be
  $(\iota_2,\bot,\bot)$ either because $w=a_0u_0$ and $\rho'$ is accepting, or
  because $c_1$ exists and the transition reading $c_1$ must start from
  $(\iota_2,\bot,\bot)$.  Therefore, the last transition of $\rho'_0$ is of type
  7 and we deduce that $p'_{m-1}\xrightarrow{d_m}p'_m\in F_1$ in $\A_1$ (with 
  $p'_0=\iota_1$).
  Therefore, $\sigma'_0=\iota_1\xrightarrow{d_1}p'_1\cdots
  p'_{m-1}\xrightarrow{d_m}p'_m$ is an accepting run of $\A_1$ for $u_0$. Since 
  $\A_1$ is unambiguous, we obtain $\sigma'_0=\sigma_0$ and it follows 
  $\rho'_0=\rho_0$.

  If $u_0$ is infinite, by definition of $\A$ we deduce that
  $$
  \rho'_0=\iota\xrightarrow{a_0}(\bot,\phi(a_0),\iota_1)
  \xrightarrow{d_1}(\bot,\phi(a_0d_1),p'_1)
  \xrightarrow{d_2}(\bot,\phi(a_0d_1d_2),p'_2) \cdots
  $$
  and $\sigma'_0=\iota_1\xrightarrow{d_1}p'_1\xrightarrow{d_2}p'_2\cdots$ is a
  run of $\A_1$ for $u_0$.  Since $\rho'$ is accepting in $\A$ we deduce that
  $\sigma'_0$ is accepting in $\A_1$.  Since $\A_1$ is unambiguous,
  we deduce that $\sigma'_0=\sigma_0$, hence also $\rho'_0=\rho_0$.

  The case $i>0$ is handled similarly. Again, there are three cases
  depending on whether $u_i$ is empty, finite of length $m>0$, or infinite.
  
  If $u_i=\varepsilon$, by definition of $\A$ we deduce that
  $\rho'_i=(q_{i-1},\bot,\bot)\xrightarrow{c_i}(q_{i-1},c,\bot)
  \xrightarrow{a_i}(q_i,\bot,\bot)$.  Indeed, either $c_ia_i$ is the last factor
  of $w$ and since $\rho'$ is accepting, the last state of $\rho'_i$ which is
  also the last state of $\rho'$ must be $(q_i,\bot,\bot)$ by definition of $F$.
  Or the letter $c_{i+1}$ exists and the transition of $\rho'$ reading $c_{i+1}$
  must be of type 3, which implies again that the last transition of $\rho'_i$
  is of type 5.  In both cases,
  $\rho'_i=\rho_i$.
  
  If $u_i$ is of length $m>0$ then by definition of $\A$ we deduce that
  \begin{align*}
    \rho'_i=(q_{i-1},\bot,\bot)\xrightarrow{c_i}(q_{i-1},c,\bot)
    \xrightarrow{a_i}(q_{i-1},c\phi(a_i),\iota_1)
    \xrightarrow{d_1}(q_{i-1},c\phi(a_id_1),p'_1) &\cdots
    \\
    (q_{i-1},\phi(a_id_1\cdots d_{m-1}),p'_{m-1}) &\xrightarrow{d_m} (q_i,\bot,\bot)
  \end{align*}
  As above, we can check that the last state of $\rho'_i$ must be
  $(q_i,\bot,\bot)$ either because $c_ia_iu_i$ is the last factor of $w$ and
  $\rho'$ is accepting, or because $c_{i+1}$ exists and the transition reading
  $c_{i+1}$ must start from $(q_i,\bot,\bot)$.  Therefore, the last transition
  of $\rho'_i$ is of type 7 and we deduce that $p'_{m-1}\xrightarrow{d_m}p'_m\in
  F_1$ in $\A_1$ (with $p'_0=\iota_1$).  Therefore,
  $\sigma'_i=\iota_1\xrightarrow{d_1}p'_1\cdots p'_{m-1}\xrightarrow{d_m}p'_m$
  is an accepting run of $\A_1$ for $u_i$.  Since $\A_1$ is unambiguous, we
  obtain $\sigma'_i=\sigma_i$ and it follows $\rho'_i=\rho_i$.

  If $u_i$ is infinite, by definition of $\A$ we deduce that
  $$
    \rho'_i=(q_{i-1},\bot,\bot)\xrightarrow{c_i}(q_{i-1},c,\bot)
    \xrightarrow{a_i}(q_{i-1},c\phi(a_i),\iota_1)
    \xrightarrow{d_1}(q_{i-1},c\phi(a_id_1),p'_1) 
    \xrightarrow{d_2}
    \cdots
  $$
  and $\sigma'_i=\iota_1\xrightarrow{d_1}p'_1\xrightarrow{d_2}p'_2\cdots$ is a
  run of $\A_1$ for $u_i$.  Since $\rho'$ is accepting in $\A$ we deduce that
  $\sigma'_i$ is accepting in $\A_1$.  Since $\A_1$ is unambiguous,
  we deduce that $\sigma'_i=\sigma_i$, hence also $\rho'_i=\rho_i$.
  
  We have proved that $\rho'_i=\rho_i$ for all $i$. Therefore,  
  $\rho'=\rho'_0\rho'_1\rho'_2\cdots =\rho_0\rho_1\rho_2\cdots =\rho$.
  
  \subsection*{$\Aa$ satisfies \eqref{item:G2}}
  Let $r\in Q$ be a state
  of $\A$ and $w\in L_r(\A)=L_{r,\da r,r}(\A)$.  So we have in $\A$ a run
  $\rho=r\xrightarrow{w}r$ using intermediary states strictly less than $r$.
  Note that $r=\iota=(\bot,\bot,\bot)$ is not possible.  Also, $r\in
  Q_2\times\{c\}\times\{\bot\}$ is not possible since between two occurrences of
  such states, we must use a transition of type 3, hence we must have a state
  from $Q_2\times\{\bot\}\times\{\bot\}$ which is strictly above for $<$ in
  $\A$.
  
  Assume first that $r=(q,\bot,\bot)$ with $q\in Q_2$.  Then, the run $\rho$ of
  $\A$ induces the following factorization $w=(c_1 a_1 u_1)(c_2 a_2 u_2)\cdots
  (c_n a_n u_n)$ with $n>0$.  We have
  $$
  \rho=(q,\bot,\bot)\xrightarrow{c_1a_1u_1}(q_1,\bot,\bot)
  \xrightarrow{c_2a_2u_2}(q_2,\bot,\bot)\cdots
  (q_{n-1},\bot,\bot)\xrightarrow{c_na_nu_n}(q,\bot,\bot)
  $$
  and the intermediary states $(q_i,\bot,\bot)$ are all less than
  $(q,\bot,\bot)$ in $\A$.  Therefore, with $b_i=\phi(c_ia_iu_i)$, we deduce
  that $\tau=q\xrightarrow{b_1}q_1\xrightarrow{b_2}q_2\cdots
  q_{n-1}\xrightarrow{b_n}q$ is a run of $\A_2$.  By definition of the order
  $<$, we deduce that $q_i<_2 q$ for $1\leq i<n$. Therefore, $b_1b_2\cdots b_n\in
  L_q(\A_2)=L_{q,\da_2 q,q}(\A_2)$.  Since $\A_2$ satisfies \eqref{item:G2}, we
  obtain $\psi(b_1\cdots b_n)=e_q$ where $e_q$ is the idempotent associated with
  state $q$ for $\A_2$.  Now, $\phi(w)=b_1\cdots b_n=\psi(b_1\cdots b_n)=e_q$
  and we get $L_r(\A)\subseteq\phi^{-1}(e_q)$.
  
  The second case is when $r=(q,s,p)\in Q^{\bot}_2\times S\times Q_1$.  Since
  $Q^{\bot}_2\times S\times Q_1 < Q_2\times\{\bot\}\times\{\bot\}$ in $\A$, the 
  run $\rho$ may only use transitions of type 6. We
  deduce that
  $$
  \rho=(q,s,p)\xrightarrow{d_1}(q,s_1,p_1)\cdots
  (q,s_{m-1},p_{m-1})\xrightarrow{d_m}(q,s,p)
  $$
  for some $m>0$ and the intermediary states $(q,s_i,p_i)$ are all less than
  $(q,s,p)$ in $\A$.  By definition of the order $<$ in $\A$ we deduce that
  $p_i\leq_1 p$ in $\A_1$ for all $1\leq i<m$.  Therefore,
  $p\xrightarrow{d_1}p_1\cdots p_{m-1}\xrightarrow{d_m}p$ is a run of $\A_1$ and
  $w\in(L_p(\A_1))^{+}$.  Let $e_p$ be the idempotent associated with state $p$
  of $\A_1$ by \eqref{item:G2}.  We have shown that $L_r(\A)\subseteq
  (L_p(\A_1))^{+}\subseteq\phi^{-1}(e_p)$ since $e_p$ is an idempotent.  This
  concludes the proof.
\end{proof}

\begin{example}\label{eg-case2}
  To illustrate the second inductive case, consider the morphism $\varphi\colon
  \Sigma^+ \to S=\{\alpha,\beta\}$ defined in Example~\ref{ex:eg-good}. 
  The first inductive case does not apply to $\varphi$ since $S\alpha=S\beta=S$.  The
  second inductive case can be used here: $\alpha S=\{\alpha\} \subsetneq S$.  Then,
  $\Sigma_1=\{b\}$ and $\Sigma_2=\{a\}$.  We have the morphism $\varphi_1\colon
  \Sigma_1^{+} \to \{\beta\}$ to which, applying the first or second basic case
  gives us the automaton $\Aa_1$.  Also, $B=\varphi(\Sigma_2\Sigma\Sigma_1^*)=\{\alpha\}$ and we
  have the morphism $\psi \colon B^+ \to \alpha S=\{\alpha\}$.  Both $\Aa_1$, $\Aa_2$ are
  automata with three states (see Figure~\ref{fig:induct-two}).  We can apply the
  construction explained above on $\Aa_1$ and $\Aa_2$ to obtain $\Aa$ as in
  Figure~\ref{fig:induct-two}.  Notice that the automaton $\Aa$ in
  Figure~\ref{fig:eg-good} (manually crafted) is also $\varphi$-good.

  \noindent{\bf{Why the weakly-good $\Aa$ cannot be deterministic.}} On this example, we now
  explain why a weakly-good automaton $\Aa$ for $\phi$ must be
  non-deterministic.  Towards a contradiction, assume that there exists a
  $\varphi$-weakly-good deterministic automaton.  Let $\Aa$ be such an automaton
  with a minimal number of states.  Let $q$ be the highest ranked state of $\Aa$
  reachable from $\iota$ such that $L_{q} \neq \emptyset$.  Consider $v \in
  L_q$.  Without loss of generality, assume that $v \in a \Sigma^*$.  Then, we
  claim that $L_q \cap b\Sigma^*=\emptyset$.  If not, we will have words $v \in
  a \Sigma^*$ and $v' \in b \Sigma^*$ both in $L_q$.  Since $\Aa$ satisfies
  \eqref{item:G2}, we know that $L_q \subseteq \varphi^{-1}(e)$ for some
  idempotent $e$.  The claim follows since $\varphi(a \Sigma^*) = \{\alpha\}$,
  $\varphi(b \Sigma^*) = \{\beta\}$ and $\alpha \neq \beta$.  Thus $L_q \subseteq a
  \Sigma^*$.  Since $\Aa$ is deterministic and universal \eqref{item:G1}, we have
  an outgoing transition on $b$ from $q$.  Let $q \stackrel{b}{\rightarrow} q'$.
  Then $q$ cannot be reached from $q'$.  Indeed, assume there is a run
  $\rho=q'\xrightarrow{w}q$ from $q'$ to $q$ and let $q''$ be the highest state
  in $\rho$.  From the run $q'\xrightarrow{w}q\xrightarrow{b}q'\xrightarrow{w}q$ we deduce that
  $L_{q''}\neq\emptyset$, which implies $q''\leq q$ by choice of $q$.  From the
  run $q\xrightarrow{b}q'\xrightarrow{w}q$ we deduce that $L_q \cap b \Sigma^* \neq
  \emptyset$, a contradiction.  This means that in any run of $\Aa$, $\iota
  \stackrel{u}{\rightarrow} q \stackrel{v}{\rightarrow} q
  \stackrel{b}{\rightarrow} q' \dots$, there is no occurrence of $q$ after $q'$.
  This allows us to construct an automaton $\Aa'$ from $\Aa$ with a new initial
  state $\iota'$ having transitions $\delta_{\Aa'}(\iota',x)=\delta_{\Aa}(q',x)$
  for $x \in \{a,b\}$, and $\delta_{\Aa'}(r,x)=\delta_{\Aa}(r,x)$ for all
  $r\notin\{\iota,q\}$ and $x \in \{a,b\}$.  We can check that $\Aa'$ is
  deterministic and $\varphi$-weakly-good, and has at least one less state than
  $\Aa$.  This contradicts the minimality of $\Aa$.
 
  Notice that if we allow a look-ahead of size 1, we can obtain a
  $\phi$-weakly-good deterministic automaton; the automaton $\Aa$ of
  Figure~\ref{fig:eg-good} (middle) is in fact one such.
  The good automaton $\Bb$ on the right Figure~\ref{fig:eg-good} can also be
  converted into a weakly-good, deterministic automaton with look-ahead two.
  Notice that we can generalize this example to show that in general, to obtain
  weakly-good and deterministic automata, a bounded look-ahead will not suffice.
Below, we generalize this argument.
\end{example}

\section*{The inherent non-determinism of weakly-good automata}
\label{app:non-det}
In this section, we show that the $\varphi$-weakly-good automata cannot be made
deterministic even with bounded look-ahead.  The second inductive case is the
reason why this cannot be.  If we look at the construction in the second
inductive case, we give a split of $w \in \Sigma^+$ into chunks of $\Sigma_2
\Sigma \Sigma_1^*$.  Even if $\Aa_1, \Aa_2$ are both deterministic and complete
weakly-good, $\Aa$ introduces non-determinism since we have to guess whether the
next symbol is in $\Sigma_2$ or in $\Sigma_1$, each time we process $u_i \in
\Sigma_1^*$.  This may give an impression that we can get rid of the
non-determinism by using a look-ahead of size 1, which simply checks if the next
symbol is in $\Sigma_1$ or $\Sigma_2$.  While this is true for
Example~\ref{eg-case2} (see the left and middle automata in
Figure~\ref{fig:eg-good}), in general it is not possible to construct a
$\varphi$-weakly-good automaton which is deterministic, and has a bounded
look-ahead.

Consider the morphism $\varphi: \Sigma^+ \rightarrow S$ where $\Sigma=\{a,b\}$,
$S=\Sigma^{\leq k}=\{u\in\Sigma^{+}\mid |u|\leq k\}$, $\varphi(x)=x$ for all
$x\in\Sigma$ and the product in $S$ is so that the elements in $\Sigma^{k}$ are
right-absorbant: $\alpha\cdot\beta=\alpha$ for all $\alpha\in\Sigma^{k}$ and
$\beta\in S$.  The morphism $\varphi$ is a generalization of the morphism in
Example~\ref{eg-case2}.  Notice that the idempotents of $S$ are all elements of
$\Sigma^k$.  It is easy to see that one can construct a $\varphi$-weakly-good
automaton which is deterministic with a $k$-look-ahead (generalizing
Figure~\ref{fig:eg-good}).  We show that it is not possible to construct a
$\varphi$-weakly-good automaton $\Aa$ which is deterministic with a
$(k-1)$-look-ahead.  Let us assume that we can indeed do this, and let $\Aa$ be
such an automaton with a minimal number of states.
   
Since $\Aa$ satisfies \eqref{item:G2}, we know that for each state $q$, there is
an idempotent $v_q\in\Sigma^{k}$ such that
$L_q\subseteq\varphi^{-1}(e_q)=v_q\Sigma^{*}$.
Let $q$ be the highest ranked state of $\Aa$ which occurs at least twice on some
infinite accepting run $\rho$ of $\Aa$.  We may write
$\rho=\iota\xrightarrow{u_1}q\xrightarrow{v_1au_2}q\xrightarrow{v_2w}$ where
$u_1,u_2\in\Sigma^{*}$, $v_1,v_2\in\Sigma^{k-1}$, $a\in\Sigma$ and
$w\in\Sigma^{\omega}$.  The unique accepting run on $u_1(v_1au_2)(v_1au_2)v_2w$
must start with
$\iota\xrightarrow{u_1}q\xrightarrow{v_1au_2}q\xrightarrow{v_1au_2}q\cdots$
since $\A$ is deterministic with $(k-1)$-look-ahead.  By choice of $q$ we deduce
that all states $q'$ occurring in the subrun $q\xrightarrow{v_1au_2}q$ satisfy
$q'\leq q$.  We deduce that $v_1au_2\in L_q^{+}\subseteq v_q\Sigma^{*}$ and
therefore $v_1a=v_q$.

Since $\Aa$ is universal \eqref{item:G1} and deterministic with
$(k-1)$-look-ahead, there are accepting runs for all words in
$u_1v_1au_2v_2b\Sigma^{\omega}$ and all these runs start with
$\iota\xrightarrow{u_1}q\xrightarrow{v_1au_2}q$.  Along these runs, the state
$q$ cannot be reached again.  Otherwise, we would have a run
$\iota\xrightarrow{u_1}q\xrightarrow{v_1au_2}q\xrightarrow{v_2bu_3}q\xrightarrow{v_3w}$.
As above, we would get $v_2bu_3\in L_q^{+}\subseteq v_q\Sigma^{*}$.  This is a
contradiction since $v_q=v_1a\neq v_2b$.

This allows us to construct from $\Aa$, an automaton $\Aa'$ as
follows.  Let $\iota'$ be the initial state of $\Aa'$, and define
$\delta_{\Aa'}(\iota', x?v)=\delta_{\Aa}(q, v_2bx?v)$, for $x \in \Sigma, v \in
\Sigma^{k-1}$.  Note that $\delta_{\Aa'}(\iota', x?v)$ is a state of $\Aa$ other
than $q$.  Further, $\delta_{\Aa'}(p,x?v)=\delta_{\Aa}(p,x?v)$ for all $p \neq
q, x \in \Sigma, v \in \Sigma^{k-1}$.  This makes $\Aa'$ a strictly smaller
deterministic, $\varphi$-weakly-good automaton with $(k-1)$-look-ahead whenever
$\Aa$ is $\varphi$-weakly-good, contradicting the minimality of $\Aa$.

\begin{remark}
  Notice that if we are dealing with commutative semigroups, then $Sc=cS$ for
  any $c \in S$.  In this case, the second inductive case $cS \subsetneq S$
  coincides with the first one.  The difficulty occurs when dealing with
  non-commutative semigroups, and in this case, the proof is much more
  challenging for the case $cS \subsetneq S$ as seen above.
\end{remark}

\subsection*{Wrapping Up}
Now we show that we have covered all cases.  Let $\phi\colon\Sigma^{+}\to S$ be
a semigroup morphism such that for all $c\in\phi(\Sigma)$ we have $cS=S=Sc$, 
i.e., neither of the two inductive cases may be applied. 
Wlog, we assume that $\phi(\Sigma^{+})=S$, otherwise we restrict $S$ to its 
sub-semigroup $\phi(\Sigma^{+})$. Hence, each element $s\in S$ can be written 
as a product $s=c_1\cdots c_k$ where $c_1,\ldots,c_k\in\phi(\Sigma)$. From the 
hypothesis it follows that $sS=S=Ss$ for all $s\in S$. Using 
Lemma~\ref{lem:semi-group} below we deduce that $S$ is a group so that we are 
in the first basic case (Lemma~\ref{lem:group}).
Lemma \ref{lem:semi-group} is a folklore result, and also works 
for infinite semi-groups. 
\begin{lemma}\label{lem:semi-group}
  If $S$ is a finite semigroup such that $sS=S=Ss$ for all $s$ in $S$ then $S$
  is a group.
\end{lemma}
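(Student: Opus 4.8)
The plan is first to extract a two-sided identity element of $S$ purely from the surjectivity of the left and right translations, and then to use the same surjectivity to invert each element.

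First I would fix an arbitrary $s\in S$. Since $s\in S=sS$, there is some $e\in S$ with $se=s$. I claim $e$ is a right identity: given any $t\in S$, surjectivity of right translation by $s$ gives $t=t's$ for some $t'\in S$, whence $te=t'se=t's=t$. Symmetrically, from $s\in S=Ss$ we obtain $f\in S$ with $fs=s$, and $f$ is a left identity: for any $t\in S$ write $t=st'$ using $S=sS$, so that $ft=fst'=st'=t$. Now $fe=f$ because $e$ is a right identity, and $fe=e$ because $f$ is a left identity, so $e=f$ is a two-sided identity; call it $\unit$.

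With $\unit$ in hand, the last step is immediate: for any $s\in S$, since $\unit\in S=sS$ there is $s'$ with $ss'=\unit$, and since $\unit\in S=Ss$ there is $s''$ with $s''s=\unit$; then $s''=s''\unit=s''(ss')=(s''s)s'=\unit s'=s'$, so $s'$ is a two-sided inverse of $s$. Hence $(S,\cdot)$ is a group.

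The only point that needs care — and it is hardly an obstacle — is noticing that a right identity produced from one particular element $s$ automatically works for every $t\in S$, which is exactly what the hypothesis $Ss=S$ buys us (and dually $sS=S$ for the left identity). In particular, finiteness of $S$ is not used anywhere, matching the remark that the statement holds for infinite semigroups as well.
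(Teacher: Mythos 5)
Your proof is correct, and it takes a genuinely different route from the one in the paper. The paper's argument leans on finiteness twice: it first invokes the existence of an idempotent $e$ in a finite semigroup, and then uses the fact that the surjection $s\mapsto se$ on a finite set must be a permutation, so that $\sigma_e=\sigma_e\circ\sigma_e$ forces $\sigma_e=\mathsf{Id}$ and $e$ is a unit; the inverse is then produced exactly as you do, via $rs=e=st\Rightarrow r=t$. You instead manufacture the identity directly from the divisibility hypotheses: solving $se=s$ for one fixed $s$ and then propagating $te=t$ to every $t$ by writing $t=t's$ (which is precisely what $Ss=S$ provides), and dually for a left identity, after which the standard $f=fe=e$ argument identifies the two. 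The payoff of your version is that it is strictly more general --- it never uses finiteness, so it proves the statement for arbitrary semigroups, which is exactly the strengthening the paper only asserts in passing (``also works for infinite semi-groups'') but whose given proof does not actually deliver. The paper's version is arguably slightly slicker in the finite case, since the idempotent-plus-permutation argument produces a two-sided unit in one stroke, but your argument is the one that justifies the folklore remark. All the individual steps you wrote out (the associativity manipulations $te=t'se=t's=t$ and $s''=(s''s)s'=s'$) check out.
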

\begin{proof}
  We show first that $S$ contains a unit element and next that all elements 
  have an inverse. 
  
  Since $S$ is a finite semigroup, it contains some idempotent $e$.  Now, $Se=S$
  implies that the right multiplication by $e$ defines a permutation $\sigma_e$
  of $S$.  We obtain $\sigma_e=\sigma_e\circ\sigma_e$ since $e$ is an
  idempotent.  We deduce that $\sigma_e=\mathsf{Id}$ is the identitiy since
  permutations of $S$ with composition form a group.  Therefore,
  $s=\sigma_e(s)=se$ for all $s\in S$ and $e$ is a right unit.  Using $eS=S$ we
  deduce similarly that $e$ is a left unit and therefore a unit of $S$.
  
  Finally, let $s\in S$. From $Ss=S=sS$ we deduce that $rs=e=st$ for some 
  $r,t\in S$. 
  It follows that $r=re=r(st)=(rs)t=et=t$ which is the inverse of $s$.
\end{proof}

\section{Applications}

We now focus on two applications obtained from synthesizing good automata.
Given a morphism $\varphi\colon \Sigma^+ \to S$ for a semi-group $S$, we first
derive the forest factorization theorem from the $\varphi$-good automaton $\Aa$
constructed above.

\begin{theorem}[Forest Factorization Derived]\label{thm:split}	
  Let $\varphi\colon\Sigma^+\to S$ be a morphism.
  For each finite or infinite word $w\in\Sigma^{\infty}$, we can construct a 
  Ramsey split $\sigma$ whose height is bounded by the number of states of a 
  weakly-good automaton for $\phi$.
\end{theorem}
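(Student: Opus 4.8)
The plan is to read the split straight off the unique accepting run of $w$ on a weakly-good automaton, using the total order on states as the set of levels.

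First I would fix, once and for all, a weakly-good automaton $\A=(Q,\Sigma,\Delta,\iota,F,R,<)$ for $\phi$; such an $\A$ exists by Theorem~\ref{thm:good-automaton}, since every $\phi$-good automaton is in particular weakly-good (it satisfies \eqref{item:G1}, \eqref{item:G2}, \eqref{item:G3}). Given a nonempty $w=a_1a_2\cdots\in\Sigma^{\infty}$ (the empty word being trivial), by \eqref{item:G1} there is a unique accepting run $\rho=q_0\xrightarrow{a_1}q_1\xrightarrow{a_2}q_2\cdots$ of $\A$ on $w$, with $q_0=\iota$. I would then define the split $\sigma$ by $\sigma(i)=q_i$ for every position $i\geq1$ of $w$, viewing $Q$ as a finite linearly ordered set of levels through $<$. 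By \eqref{item:G3} the initial state $\iota$ has no incoming transition, so $q_i\neq\iota$ for all $i\geq1$; hence $\sigma$ takes values in $Q\setminus\{\iota\}$, and identifying this ordered set with an initial segment of the naturals exhibits $\sigma$ as a split of height at most $|Q|$ (indeed $|Q|-1$), which is the bound claimed.

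Next I would verify the Ramsey condition (in the sense of Colcombet). The crucial observation is the following. Suppose $x<y$ are positions with $\sigma(x)=\sigma(y)=q$ and $\sigma(z)\leq q$ for all $x<z<y$. Then the subrun of $\rho$ from $q_x=q$ to $q_y=q$ uses only states $\leq q$; cutting it at the intermediate occurrences of $q$ (if there are any) writes $w[x{+}1,\dots,y]$ as a nonempty concatenation of words, each of which admits a run from $q$ to $q$ through states strictly below $q$, i.e.\ each lying in $L_q=L_{q,\da q,q}$. By \eqref{item:G2} there is an idempotent $e_q\in S$ with $L_q\subseteq\phi^{-1}(e_q)$, so $\phi(w[x{+}1,\dots,y])=e_q^k=e_q$ for the relevant $k\geq1$. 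Applying this to any three positions $x<y<z$ of common value $q$ with only values $\leq q$ in between, and using that $e_q$ is idempotent, gives $\phi(w[x{+}1,\dots,y])=\phi(w[y{+}1,\dots,z])=\phi(w[x{+}1,\dots,z])=e_q$. This is exactly the property defining a Ramsey split, and it holds uniformly even when the level $q$ recurs infinitely often along $\rho$: \eqref{item:G2} forces \emph{all} such segments to evaluate to one and the same idempotent $e_q$, which is precisely the uniformity Colcombet's definition requires.

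Since the construction invokes only \eqref{item:G1}, \eqref{item:G2}, \eqref{item:G3} -- that is, weakly-goodness -- the height of $\sigma$ is bounded by the number of states of a weakly-good automaton for $\phi$, as stated. I expect the main obstacle to be not any genuine difficulty but rather the careful alignment with the precise formulation of Colcombet's notion of a Ramsey split: the indexing of word positions against states of the run (this is why position $i$ is labelled by the state $q_i$ reached \emph{after} reading $a_i$, so that $\iota$ never serves as a level), the presence of intermediate positions of value exactly $q$ (absorbed by the decomposition into factors of $L_q$), and, in the infinite case, the collapse of all $q$-segments to a single idempotent (handled by \eqref{item:G2}). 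The rest is routine.
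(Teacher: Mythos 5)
Your proposal is correct and follows essentially the same route as the paper: label each position of $w$ by the state reached at that point in the unique accepting run (property \eqref{item:G1}), observe that two equivalent positions $i<j$ give $w(i,j]\in L_{q_i}^{+}$, and invoke \eqref{item:G2} to evaluate this to the idempotent $e_{q_i}$. The only cosmetic differences are that the paper formalizes the levels via a monotone bijection $h\colon(Q,<)\to(\{1,\dots,|Q|\},<)$ and includes position $0$, whereas you exploit \eqref{item:G3} to drop $\iota$ and shave the bound to $|Q|-1$.
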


\begin{proof}
  Let $\Aa$ be a weakly-good automaton for the morphism $\varphi$.  In
  particular $\Aa$ satisfies \ref{item:G1} and \ref{item:G2}.  Let $h\colon(Q,<)
  \to (\{1, \dots, |Q|\},<)$ be a monotone bijection.  To define the split
  $\sigma$ of $w=a_1a_2a_3\cdots\in\Sigma^{\infty}$, consider the unique
  accepting run
  $\rho=q_0\xrightarrow{a_1}q_1\xrightarrow{a_2}q_2\xrightarrow{a_3}q_3\cdots$
  of $w$ in $\Aa$ and define $\sigma(i)=h(q_i)$ for all positions $i\geq0$ of
  $w$.  Notice that two positions $i<j$ are $\sigma$-equivalent ($i\sim j$) iff
  $q_i=q_j$ and $q_k\leq q_i$ for all $i\leq k\leq j$.  We deduce that
  $w(i,j]=a_{i+1}\cdots a_j\in L_{q_i}^{+}$.  Hence, $\phi(w(i,j])=e_{q_i}$ is
  the idempotent associated with state $q_i$.  Therefore, the split $\sigma$ for
  $w$ is Ramsey. 
\end{proof}
\subsection*{Some remarks on the height of the factorisation tree}
\label{app:height}
Theorem \ref{thm:split} gives an easy proof for the existence of a Ramsey split
where the height is bounded by the number of states of the weakly-good automaton
for $\varphi$.  Notice that this bound on the height is rather loose and can be
optimized.  To get an idea of this height $H$, we look at the basic and
inductive cases.  In the first base case when $S$ is a group, we know by
construction that $H=|S|+1$.  In the second basic case where
$|\varphi(\Sigma)|=1$, we know that $|S|=k+\ell-1$, and the automaton had $k+n$
states where $k \leq n \leq k+\ell$, obtaining $H \leq 2|S|$.

Now let us turn to the inductive cases.  Let $H_1$, $H_2$ respectively be the
number of states of $\Aa_1$ and $\Aa_2$.  The monotone bijection $h$ for $\Aa$
is defined using the monotone bijections $h_1\colon (Q_1, <_1) \to (\{1,\dots,
|Q_1|\}, <_1)$ and $h_2\colon (Q_2, <_2) \to (\{1,\dots, |Q_2|\}, <_2)$ obtained
from $\Aa_1, \Aa_2$.

Assuming we are in the first inductive case, the number of states $|Q|$ of the
constructed $\Aa$ is $|Q|=|Q_2|+|Q_2|\times |S| \times |Q_1|$.  Actually, one
can check that we can save on the height of the split by defining
$h(q)=H_1+h_2(q)$ for $q\in Q_2$, and $h((q,s,p))=h_1(p)$ for $(q,s,p)\in Q_2
\times S \times Q_1$. The map $h$ is not a bijection anymore, but a careful 
analysis shows that the split as defined in the proof of 
Theorem~\ref{thm:split} is Ramsey.

Now assume we are in the second inductive case.  The number of states $|Q|$ of
the constructed $\Aa$ is $|Q|=(|Q_2|+1)\times (|S|+1) \times (|Q_1|+1)$.  Since
the states of $Q_2 \times \{c\} \times \bot$ are the lowest in the ordering, and
since we know that we cannot revisit any $(q,c,\bot)$ without seeing an higher
state, we can safely assign the same height to all of them: $h(Q_2 \times \{c\} 
\times \bot)=1$. As in the first inductive case, we can also define 
$h(q,s,p)=1+h_1(p)$ for $(q,s,p)\in Q_2^{\bot}\times S\times Q_1$,
$h(q,\bot,\bot)=1+H_1+h_2(q)$ for $q\in Q_2$ and $h(\bot, \bot, \bot)=H_1+H_2+2$.
The split obtained in this way is Ramsey.
  
Note that in both cases, $h$ is indeed monotone, respecting the ordering of
states in $\Aa$ (see Figures \ref{fig:RunInductive1}, \ref{fig:RunInductive2}).
Moreover, the bound on the height $H$ that we require is $\leq H_1+H_2+2$.

\subsection{Good Automata to Good Expressions}

In this section, we show how we can use the $\varphi$-good automata to obtain
good expressions.  We start from a \emph{good} automaton
$\A=(Q,\Sigma,\Delta,\iota,f,R,<)$ for a semigroup morphism
$\phi\colon\Sigma^{+}\to S$.  Wlog, we assume that $\Aa$ is reduced,
i.e., all states in $\Aa$ belong to some accepting run.  We construct the good
expressions by state elimination.  For all $p,q\in Q$ and $X\subseteq Q$ such
that $X<\{p,q\}$, and for all $s\in S$, we construct a $\phi$-good expression
$F_{p,X,q}^{s}$ such that $\Lang{F_{p,X,q}^{s}}=L_{p,X,q}\cap\phi^{-1}(s)$
(recall that $\emptyset$ is a good expression).  The construction is by
induction.

The base case is when $X=\emptyset$.  Then,
$L_{p,\emptyset,q}\subseteq\Sigma$ so $F^{s}_{p,\emptyset,q}$ is
either empty or a finite union of letters from $\Sigma$, which is indeed 
$\phi$-good.

Let $r\in Q$, $X=\da r=\{r'\in Q\mid r'<r\}$ and $Y=X\cup\{r\}$.  Assume by
induction that for all $p,q$ such that $X<\{p,q\}$ and all $s\in S$ we have
already constructed good expressions $F_{p,X,q}^{s}$. In particular, we have 
already computed the good expressions $F_{r,X,r}^{s}$. Since the automaton $\A$ 
is \emph{good}, we have $L_{r,X,r}=L_r\subseteq\phi^{-1}(e_r)$ where $e_r\in S$ 
is the idempotent associated with $r$. In particular, $L_{r,X,r}^{s}=\emptyset$ if 
$s\neq e_r$.

Let $p,q\in Q$ be such that $Y<\{p,q\}$ and let $s\in S$. We define
\begin{align*}
  F^{s}_{p,Y,q} & = F^{s}_{p,X,q} \cup
  \bigcup_{s_1s_2=s} F^{s_1}_{p,X,r}\cdot F^{s_2}_{r,X,q}
  \cup
  \bigcup_{s_1e_rs_2=s} (F^{s_1}_{p,X,r}\cdot (F_{r,X,r}^{e_r})^{+}) \cdot F^{s_2}_{r,X,q}
  \,.
\end{align*}

\begin{lemma}
  The regular expression $F_{p,Y,q}^{s}$ is \emph{good} and we have
  $\Lang{F_{p,Y,q}^{s}}=L_{p,Y,q}\cap\phi^{-1}(s)$.
\end{lemma}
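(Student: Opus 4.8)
The plan is to prove the two claims—goodness of $F^s_{p,Y,q}$ and the language equality—together, by examining the three parts of the defining union. First I would check the language equality. A run from $p$ to $q$ using only intermediary states in $Y = X\cup\{r\}$ either never visits $r$ (captured by $F^s_{p,X,q}$), or visits $r$ exactly once (giving a run $p\to_X r\to_X q$, captured by the middle term), or visits $r$ at least twice, in which case it decomposes canonically as a first segment $p\to_X r$, then a nonempty sequence of loops $r\to_X r$ (i.e.\ a word in $L_{r,X,r}^+ = L_r^+$), then a final segment $r\to_X q$; this is captured by the last term. By induction hypothesis each $F^{s_i}_{\cdot,X,\cdot}$ denotes exactly $L_{\cdot,X,\cdot}\cap\phi^{-1}(s_i)$, and since $\phi$ is a morphism the concatenation of languages with images $s_1,s_2$ (resp.\ $s_1,e_r,s_2$, using that $L_r\subseteq\phi^{-1}(e_r)$) lands in $\phi^{-1}(s_1s_2)$ (resp.\ $\phi^{-1}(s_1e_rs_2)$), so taking the union over all factorizations $s_1s_2=s$ (resp.\ $s_1e_rs_2=s$) recovers exactly the words mapped to $s$.

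Next I would verify the three goodness conditions. For condition~2 (every subexpression maps into a single semigroup element): the new subexpressions introduced are $F^{s_1}_{p,X,r}\cdot F^{s_2}_{r,X,q}$, $(F^{e_r}_{r,X,r})^+$, $F^{s_1}_{p,X,r}\cdot(F^{e_r}_{r,X,r})^+$, and the full product with $F^{s_2}_{r,X,q}$; each has language included in $\phi^{-1}$ of $s_1s_2$, $e_r$, $s_1e_r$, $s_1e_rs_2$ respectively, and the subexpressions of the $F^{s_i}_{\cdot,X,\cdot}$ are good by induction. For condition~3 (every starred subexpression maps to an idempotent): the only new Kleene-plus is $(F^{e_r}_{r,X,r})^+$, whose base $F^{e_r}_{r,X,r}$ has language $L_r \subseteq \phi^{-1}(e_r)$ with $e_r$ idempotent by \eqref{item:G2}; all other plus-subexpressions are inherited from the $F^{s_i}_{\cdot,X,\cdot}$.

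The main obstacle—and the part deserving the most care—is condition~1, unambiguity of $F^s_{p,Y,q}$. I would argue it in two layers. First, the three top-level terms of the union are pairwise disjoint in language: a word in $L_{p,Y,q}$ determines whether its (unique, since $\A$ is unambiguous) run visits $r$ zero times, once, or at least twice, and these three cases correspond exactly to the three terms. Second, within each term the various factorizations must be shown unambiguous: for $\bigcup_{s_1s_2=s}F^{s_1}_{p,X,r}\cdot F^{s_2}_{r,X,q}$, a word $w$ with a run $p\to_X r\to_X q$ has a unique position where the run first reaches $r$ (it visits $r$ exactly once here), so the split of $w$ and the pair $(s_1,s_2)=(\phi(\text{prefix}),\phi(\text{suffix}))$ are both determined; similarly for the last term, the canonical loop decomposition means the prefix ending at the first visit to $r$ is unique, the suffix starting at the last visit to $r$ is unique, and the middle word in $L_r^+$ factors unambiguously because $F^{e_r}_{r,X,r}$ is good hence $(F^{e_r}_{r,X,r})^+$ is unambiguous (a word in $L_r^+$ admits a unique factorization into $L_r$-factors, as the run returns to $r$ at a determined set of positions—those intermediate $r$-visits not dominated by a higher state, which here means simply all $r$-visits since intermediaries are in $X<r$). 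Finally, concatenating unambiguous good expressions whose languages have pinned-down $\phi$-images at the cut points yields an unambiguous good expression, and unions of expressions with disjoint languages preserve unambiguity; assembling these observations gives that $F^s_{p,Y,q}$ is good. I would also note that applying the $\emptyset$-simplification rewrite rules at the end keeps the expression good and $\varepsilon$-free, as remarked earlier in the paper.
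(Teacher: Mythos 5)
Your proof is correct and takes essentially the same route as the paper's, namely the standard state-elimination case analysis on the number of visits to $r$; the paper merely asserts unambiguity ``since the automaton is reduced and unambiguous'' and that the language equality ``follows easily'', whereas you spell out the disjointness of the three terms and the canonical decomposition at the $r$-visits. The only nuance worth keeping in mind is that uniqueness of the run $p\to_Y q$ on a given word uses reducedness of $\A$ (to extend it to an accepting run) and not just unambiguity, which is exactly why the paper invokes both hypotheses.
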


\begin{proof}
  First, notice that the expression $F^{s}_{p,Y,q}$ is unambiguous since  the 
  automaton $\Aa$ is reduced and unambiguous.
  By inductive hypothesis, $F^{s}_{p,X,q}$, $F^{s_1}_{p,X,r}$, $F^{s_2}_{r,X,q}$
  and $F_{r,X,r}^{e_r}$ are good expressions.  Since $e_r$ is an idempotent,
  $(F_{r,X,r}^{e_r})^+$ is also a good expression.  We can also check that each 
  sub-expression maps to the same semigroup element, which could be $s$ or some 
  $s_1e_r$ in the last union. In particular, we have $F_{p,Y,q}^{s}\in\varphi^{-1}(s)$.
  We deduce that $F^{s}_{p,Y,q}$ is good.
  The fact that $\Lang{F_{p,Y,q}^{s}} \subseteq L_{p,Y,q}$ follows easily.
\end{proof}

\begin{proof}[Proof of Theorem~\ref{thm:U-forest}]
  \eqref{item:T1} Recall that since the automaton $\A$ is \emph{good}, it satisfies 
  \eqref{item:G4}. Let $X=Q\setminus\{\iota,f\}$. We have $X<\{\iota,f\}$.
  We let $F_s=F_{\iota,X,f}^{s}$. We have 
  $\Lang{F_s}=L_{\iota,X,f}\cap\phi^{-1}(s)=\Sigma^{+}\cap\phi^{-1}(s)$.

  \eqref{item:T2}
  Each $\omega$-word $w\in\Sigma^{\omega}$ has a unique accepting run in $\A$.
  We partition $\Sigma^{\omega}$ according to the largest reapeated state along
  an accepting run.  Let $r\in R$ be a repeated (B\"uchi) state and recall that
  $\da r=\{r'\in Q\mid r'<r\}$.  The $\omega$-words accepted by $\A$ using $r$
  as the largest accepting state can be described with the $\omega$-regular
  expression $F_{\iota,\da r,r}\cdot(F_{r,\da r,r}^{e_r})^{\omega}$.  Therefore the
  unambiguous expression is
  $$
  G=\bigcup_{r\in R} F_{\iota,\da r,r}\cdot(F_{r,\da r,r}^{e_r})^{\omega} \,.
  $$
  This conclude the proof.
\end{proof}

\bibliography{UFF}

\begin{thebibliography}{10}

\bibitem{DBLP:conf/dlt/Bojanczyk09}
Miko{\l}aj Boja{\'{n}}czyk.
\newblock Factorization forests.
\newblock In {\em Proceedings of Developments in Language Theory, 13th
  International Conference, {DLT} 2009, Stuttgart, Germany}, pages 1--17, 2009.
\newblock \href {http://dx.doi.org/10.1007/978-3-642-02737-6_1}
  {\path{doi:10.1007/978-3-642-02737-6_1}}.

\bibitem{DBLP:conf/fct/Colcombet07}
Thomas Colcombet.
\newblock Factorisation forests for infinite words.
\newblock In {\em Proceedings of Fundamentals of Computation Theory, 16th
  International Symposium, {FCT} 2007, Budapest, Hungary}, pages 226--237,
  2007.
\newblock \href {http://dx.doi.org/10.1007/978-3-540-74240-1_20}
  {\path{doi:10.1007/978-3-540-74240-1_20}}.

\bibitem{Colcombet_2010}
Thomas Colcombet.
\newblock Factorization forests for infinite words and applications to
  countable scattered linear orderings.
\newblock {\em Theoretical Computer Science}, 411(4-5):751--764, Jan 2010.
\newblock \href {http://dx.doi.org/10.1016/j.tcs.2009.10.013}
  {\path{doi:10.1016/j.tcs.2009.10.013}}.

\bibitem{ColcombetFactForest}
Thomas Colcombet.
\newblock The factorisation forest theorem.
\newblock Handbook ``Automata: from Mathematics to Applications'', 2013.

\bibitem{lics18}
Vrunda Dave, Paul Gastin, and Shankara~Narayanan Krishna.
\newblock Regular transducer expressions for regular transformations.
\newblock In {\em Proceedings of the 33rd Annual {ACM/IEEE} Symposium on Logic
  in Computer Science, {LICS} 2018, Oxford, UK}, pages 315--324, 2018.
\newblock \href {http://dx.doi.org/10.1145/3209108.3209182}
  {\path{doi:10.1145/3209108.3209182}}.

\bibitem{DiGa08Thomas}
Volker Diekert and Paul Gastin.
\newblock First-order definable languages.
\newblock In J{\"o}rg Flum, Erich Gr{\"a}del, and Thomas Wilke, editors, {\em
  Logic and Automata: History and Perspectives}, volume~2 of {\em Texts in
  Logic and Games}, pages 261--306. Amsterdam University Press, 2008.

\bibitem{Diekert_2016}
Volker Diekert and Manfred Kufleitner.
\newblock A survey on the local divisor technique.
\newblock {\em Theoretical Computer Science}, 610:13--23, Jan 2016.
\newblock \href {http://dx.doi.org/10.1016/j.tcs.2015.07.008}
  {\path{doi:10.1016/j.tcs.2015.07.008}}.

\bibitem{mfcs08}
Manfred Kufleitner.
\newblock The height of factorization forests.
\newblock In {\em Proceedings of Mathematical Foundations of Computer Science,
  33rd International Symposium, {MFCS} 2008, Torun, Poland}, pages 443--454,
  2008.
\newblock \href {http://dx.doi.org/10.1007/978-3-540-85238-4\_36}
  {\path{doi:10.1007/978-3-540-85238-4\_36}}.

\bibitem{Simon_1990}
Imre Simon.
\newblock Factorization forests of finite height.
\newblock {\em Theoretical Computer Science}, 72(1):65--94, Apr 1990.
\newblock \href {http://dx.doi.org/10.1016/0304-3975(90)90047-l}
  {\path{doi:10.1016/0304-3975(90)90047-l}}.

\bibitem{Wilke_1999}
Thomas Wilke.
\newblock Classifying discrete temporal properties.
\newblock In S.~Tison C.~Meinel, editor, {\em Proceedings of {STACS}'99},
  volume 1563 of {\em Lecture Notes in Computer Science}, pages 32--46.
  Springer Berlin Heidelberg, 1999.
\newblock \href {http://dx.doi.org/10.1007/3-540-49116-3_3}
  {\path{doi:10.1007/3-540-49116-3_3}}.

\end{thebibliography}

\end{document}